\def\vec{\mathaccent "017E\relax }
\definecolor{p2color}{HTML}{D95319}
\definecolor{pscolor}{HTML}{77AC30}
\definecolor{hpscolor}{HTML}{7E2F8E}
\newcommand{\btitle}[1]{\mbox{}{\bf{(#1).}}}
\newtheorem{theorem}{Theorem}[section]
\newtheorem{proposition}[theorem]{Proposition}
\newcommand{\real}{\ensuremath{\mathbb{R}}}
\newcommand{\realpos}{\ensuremath{\mathbb{R}_{>0}}}
\newcommand{\realnonneg}{\ensuremath{\mathbb{R}_{\ge 0}}}
\newcommand{\naturals}{\ensuremath{\mathbb{N}}}
\newcommand{\intgnonneg}{\ensuremath{\mathbb{Z}_{\ge 0}}}
\newcommand{\SO}{{\rm{SO}}(3)}
\newcommand{\mb}[1]{\mathbf{ #1 }}
\newcommand{\col}{\mathrm{col}}
\newcommand{\diag}{\mathrm{diag}}
\newcommand{\paren}[1]{\left(#1\right)}
\newcommand{\braces}[1]{\left\{#1\right\}}
\newcommand{\s}{\scriptscriptstyle}
\newcommand{\sigt}{\sigma(t)}
\newcommand{\Ac}{\mathcal{A}}
\newcommand{\Bc}{\mathcal{B}}
\newcommand{\Dc}{\mathcal{D}}
\newcommand{\Ec}{\mathcal{E}}
\newcommand{\Fc}{\mathcal{F}}
\newcommand{\Gc}{\mathcal{G}}
\newcommand{\Hc}{\mathcal{H}}
\newcommand{\Ic}{\mathcal{I}}
\newcommand{\Mc}{\mathcal{M}}
\newcommand{\Nc}{\mathcal{N}}
\newcommand{\Oc}{\mathcal{O}}
\newcommand{\Qc}{\mathcal{Q}}
\newcommand{\Vc}{\mathcal{V}}
\newcommand{\px}{{p}^{x}}
\newcommand{\py}{{p}^{\smash y}}
\newcommand{\pz}{{p}^{z}}
\newcommand{\smsh}[1]{\vphantom{{\bar{#1}}}#1}
\newcommand{\pxi}{\px_{\smsh{i}}}
\newcommand{\pyi}{\py_{\smsh{i}}}
\newcommand{\pzi}{\pz_{\smsh{i}}}
\newcommand{\ddotpxi}{\ddot{p}^{x}_{\smsh{i}}}
\newcommand{\ddotpyi}{\ddot{p}^{ y}_{\smsh{i}}}
\newcommand{\ddotpzi}{\ddot{p}^{z}_{\smsh{i}}}
\newcommand{\dotpxi}{\dot{p}^{x}_{\smsh{i}}}
\newcommand{\ddotpxip}{\ddot{p}^{x'}_{\smsh{i}}}
\newcommand{\ddotpyip}{\ddot{p}^{ y'}_{\smsh{i}}}
\newcommand{\ddotpzip}{\ddot{p}^{z'}_{\smsh{i}}}
\newcommand{\xr}{{\rm{x}}}
\newcommand{\pr}{{\rm{p}}}
\newcommand{\vr}{{\rm{v}}}
\newcommand\nullH{\Hc^{\raisebox{0pt}{\scriptsize $\mathfrak{0}$}}}
\newcommand\alterH{\Hc^{\raisebox{0pt}{\scriptsize $\mathfrak{1}$}}}
\newcommand{\bxtar}{\mb{x}^{\boldsymbol{*}}}
\newcommand{\rxtar}{{\rm x}^{\boldsymbol{*}}}
\newcommand{\pstar}{\boldsymbol{p}^{\boldsymbol{*}}}
\newcommand{\ua}{\mb{u}_{\rm a}}
\newcommand{\us}{\mb{u}_{\rm s}}
\newcommand{\uar}{{\rm u}_{\rm a}}
\newcommand{\usr}{{\rm u}_{\rm s}}
\newcommand{\pb}{\boldsymbol{p}}
\newcommand{\vb}{\boldsymbol{v}}
\newcommand{\ub}{\boldsymbol{u}}
\newcommand{\efrak}{\mathfrak{e}}
\newcommand{\bfrak}{\mathfrak{b}}
\newcommand{\gframe}{\{\boldsymbol{\Ic}\}}
\newcommand{\bframe}{\{\boldsymbol{\Bc}^i\}}
\newcommand{\iframe}{\{\boldsymbol{\Ic'}\}}
\newcommand{\gframeAxes}{\braces{\vec{\efrak}_x, \vec{\efrak}_y, \vec{\efrak}_z}}
\newcommand{\bframeAxes}{\{\vec{\bfrak}^i_1, \vec{\bfrak}^i_2, \vec{\bfrak}^i_3\}}
\newcommand{\iframeAxes}{\braces{\vec{\efrak}_x\vphantom{.}\!', \vec{\efrak}_y\vphantom{.}\!', \vec{\efrak}_z\vphantom{.}\!'}}
\newcommand{\adj}{\mathsf{A}}
\newcommand{\lap}{\mathsf{L}}
\newif\ifshort
\title{\LARGE \bf
Detection of Stealthy Adversaries for Networked Unmanned Aerial Vehicles
}
\author{Mohammad Bahrami and Hamidreza Jafarnejadsani
\thanks{Mohammad Bahrami and Hamidreza Jafarnejadsani are with the Department of Mechanical Engineering, Stevens Institute of Technology, Hoboken, NJ 07030, USA,
        {\tt\small \{mbahrami,hjafarne\}@stevens.edu}.}
}
\begin{document}

\maketitle
\thispagestyle{empty}
\pagestyle{empty}

\begin{tikzpicture}[overlay, remember picture]
	\path (current page.north) ++(0.0,-1.0) node[draw = black] {\small This paper has been accepted for publication in the proceedings of 2022 International Conference on Unmanned Aircraft Systems (ICUAS)};
\end{tikzpicture}
\vspace{-0.3cm}

\begin{abstract}
A network of unmanned aerial vehicles (UAVs) provides distributed coverage, reconfigurability, and maneuverability in performing complex cooperative tasks. However, it relies on wireless communications that can be susceptible to cyber adversaries and intrusions, disrupting the entire network's operation. This paper develops model-based centralized and decentralized observer techniques for detecting a class of stealthy intrusions, namely zero-dynamics and covert attacks, on networked UAVs in formation control settings.  The centralized observer that runs in a control center leverages switching in the UAVs' communication topology for attack detection, and the decentralized observers, implemented onboard each UAV in the network, use the model of networked UAVs and locally available measurements. Experimental results are provided to show the effectiveness of the proposed detection schemes in different case studies.
\end{abstract}
\section*{Supplementary Material}
\small
Video: {\footnotesize \url{https://youtu.be/lVT_muezKLU}}

Code: {\footnotesize \url{https://github.com/SASLabStevens/TelloSwarm}}
\normalsize
\section{Introduction}\label{Sec:introduction}
Teams of autonomous robots, particularly Unmanned Aerial Vehicles (UAVs), are of interest as their cooperation allows for distributed coverage,  reconfigurability, and mobility in a wide range of applications such as search and rescue missions, surveillance, wildfire monitoring, and delivery. 
Networked UAVs rely on information exchange over a wireless communication network to coordinate and cooperate. 
However, numerous studies have shown the vulnerabilities of wireless networks to spoofing attacks, malicious intrusions, and Denial of Service (DoS), which raises security concerns in safety- and time-critical applications. 
This paper addresses the problem of detecting a class of \emph{stealthy} attacks on the control system of networked UAVs in a setting where UAVs cooperate to achieve a particular formation, and their communication network is subject to topology switching.
\begin{figure}[t]
    \centering
    \includegraphics[width=.8\linewidth]{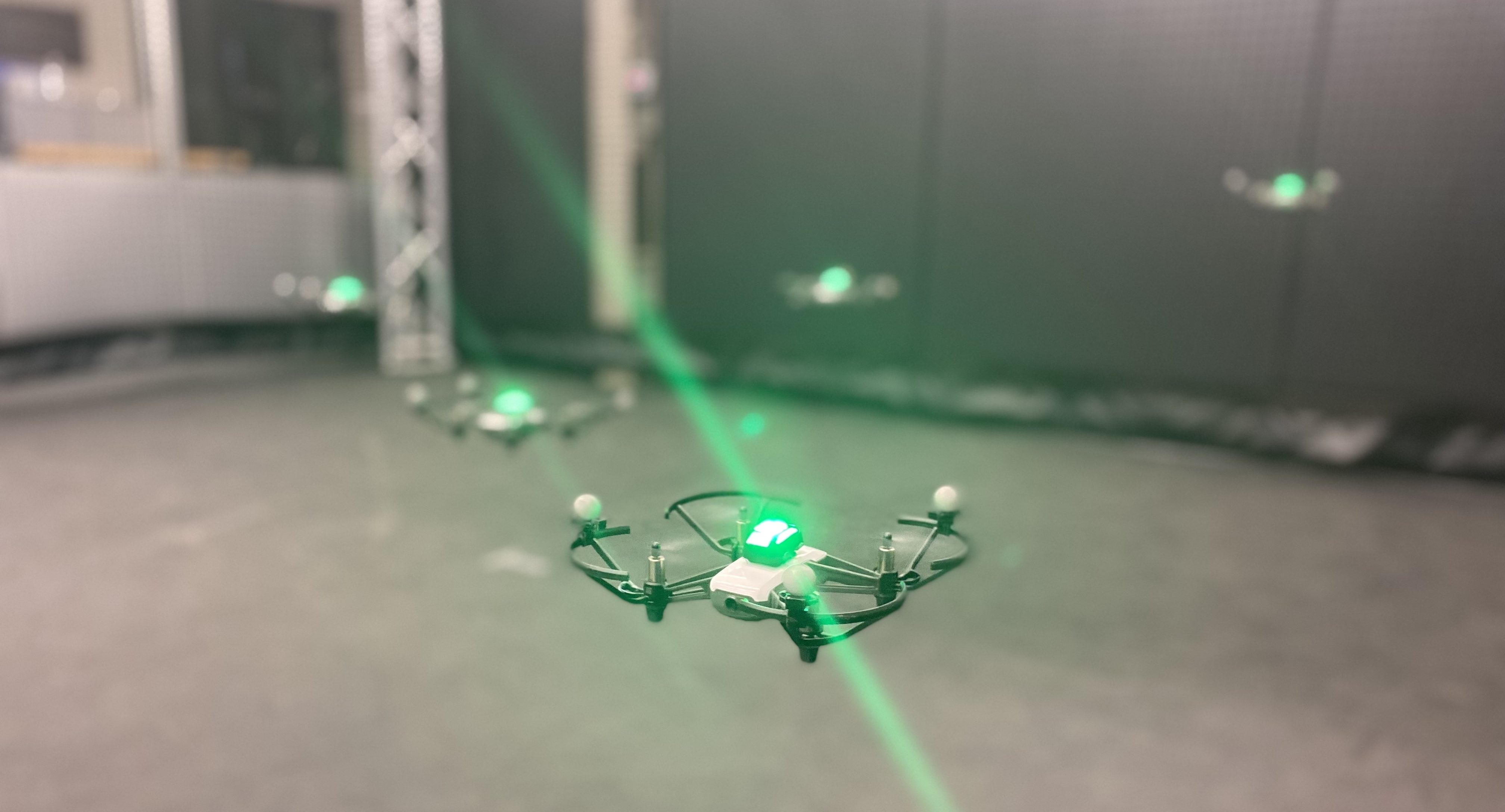}
    \caption{ 
    The experimental setup for multi-UAV formation control using a motion capture system for positioning.
    } 
    \label{fig:tello}
\end{figure}

\noindent
\subsection{Related work} 

\noindent
\emph{Multi-UAV cooperation.} Cooperation and coordination of a network of robots (UAVs) in the forms of swarming and formation control have been extensively studied 
\ifshort
\cite{chung2018survey,oh2015survey}.
\else
\cite{chung2018survey,oh2015survey,liu2018survey}.
\fi
Formation control has been adopted in performing collaborative tasks such as coverage control for sensor networks \cite{cortes2004coverage}, forest firefighting \cite{harikumar2018multi}, and sensor planning for precision agriculture \cite{tokekar2016sensor}.
A team of UAVs can achieve formation by exchanging their spatial local information (e.g. position and velocity states) and in different settings such as leader-follower, leaderless consensus-based, and virtual structure approaches \cite{oh2015survey}.
This paper focuses on consensus-based formation control of a team of UAVs coordinating their relative positions over a switching communication network.  

\noindent
\emph{Adversary and cyber-attack detection for UAVs.} 
The susceptibility of networked UAVs to cyber adversaries and attacks has been reported in
\ifshort
\cite{manesh2019cyber,yaacoub2020security}.
\else
\cite{manesh2019cyber,yaacoub2020security,chriki2019fanet}.
\fi
There are also different defense mechanisms against attacks and adversaries. 
Intrusion Detection Systems (IDS) that operate often based on the statistical characteristics of transmitted data can significantly limit possible attacks, thereby improving the network security \cite{choudhary2018intrusion}.
However, IDS' invariance to the system's dynamics renders them incapable of detecting stealthy attacks devised based on networked UAVs' dynamical characteristics. 
Examples of such attacks are zero-dynamics attack \cite{mao2020novel}, covert attack \cite{smith2015covert}, and replay attack \cite{mo2009secure}.
Challenges of IDS and also model-based monitoring frameworks in detecting stealthy attacks on a single UAV have been studied in \cite{kwon2014analysis,dash2019out}.
Redundancy in hardware and software has been adopted to detect and mitigate a class of attacks on a single UAV \cite{fei2018cross}. 
Secure planning against stealthy (GPS spoofing) attack has been studied using control theoretic \cite{liu2020secure} and reinforcement learning \cite{bozkurt2021secure} approaches.
Only a few studies have considered the detection problem of stealthy attacks on networked UAVs, which are encryption and encoding for detection of false data injection (FDI) attacks \cite{liu2020fdi}, information fusion for GPS spoofing detection \cite{liang2020detection}, distributed observers for cyberattack detection and isolation \cite{negash2017distributed}, and finally learning-based methods to detect stealthy FDI attacks targeting a single UAV in cooperative localization \cite{khanapuri2022learning} and to detect anomalies in swarming drones \cite{ahn2020deep}.
However, none of these studies have considered the detection problem of stealthy attacks, namely zero-dynamics attack (ZDA) and covert attack, on coordination of networked UAVs.

\noindent
\emph{Statement of contributions.} 
We study the detection of \emph{stealthy} attacks on a team of UAVs coordinating their relative positions, over a wireless network with switching communication links, to achieve formation. Stealthy attacks are designed based on the high-level coordination model (communication topology) of the UAVs to maximize the attack's effect on the cooperation performance as well as the attack's stealthiness in monitored signals. 

The main contributions of this paper are as follows: We develop centralized and decentralized attack detection strategies against \emph{stealthy} attacks, namely zero-dynamics attack (ZDA) and covert attack, that adversely affect the coordination objective of networked UAVs while remaining stealthy in the monitoring signals. We investigate switching communication links in terms of their role in attack detection and system reconfigurability. We demonstrate an experimental implementation of the stealthy attacks on a team of small UAVs and evaluate the effectiveness of our proposed methods in the timely detection of stealthy attacks.
\section{Problem Formulation}\label{Sec:problem_formulation}
\subsection{Notations}
We use  $ \mathbb{R} $, $ \mathbb{R}_{> 0} $, $ \mathbb{R}_{\geq 0} $, $\naturals$, and $\intgnonneg$ to denote the set of reals, positive reals, non-negative reals, natural numbers, and  non-negative integers, respectively. 
$\boldsymbol{1}_n$, $ \boldsymbol{0}_n$, $ {I}_n $ and $ \boldsymbol{0}_{n \times m} $ stand for the $n$-vector of all ones, the $n$-vector of all zeros, the identity $ n $-by-$ n $ matrix, and the $ n $-by-$ m $ zero matrix, respectively\footnote{We may omit the subscripts when clear from the
context.}. 
In addition, we use $\mathfrak{e}_{i}$ to denote the $i$-th canonical vector in $\real^{n}$.
We use $\col(\cdot)$ and $\diag(\cdot)$ to denote the column and diagonal concatenation of vectors or matrices, and $\otimes$ to denote the Kronecker product. Finally,
for any set $\Fc$, $ |\Fc| $ denotes its cardinality.
\subsection{Quadrotor's dynamics}
We consider a team of $N$ homogeneous unmanned aerial vehicles (quadrotor UAVs) that cooperate to achieve a geometric shape/formation in $\real^{\rm 2}$. 
Attached to the center of mass of each quadrotor, the body frame $\bframe$ with unit axes $\bframeAxes,\ i \in \{1, \dots,  N\} =: \Vc$, whose position and orientation with respect to the inertial global frame $\gframe$ with unit vectors $\gframeAxes$ (see Figure \ref{fig:quad_coord}a) are, respectively, determined by a vector ${p}_i = \col \paren{\pxi,\pyi,\pzi} \in \gframe,\ \forall \, i \in \Vc $ and a rotation matrix $R_i(\psi_i, \phi_i, \theta_i) \in \SO $ in the special orthogonal group
with $\psi_i$,  $\phi_i$, and $\theta_i$ being the respective $z\!-\!x\!-\!y$ Euler angles.
Then, the rigid body motion of the quadrotors follows \cite{mahony2012multirotor}
\begin{subequations}\label{eq:quad_dynamics}
\begin{align}
\label{eq:quad_dynamics_translational}
    \dot{p}_i &= {v}_i,
    \ \
   & m\dot{v}_i &= -mg \vec{{\efrak}}_z+R_i {f_i},
    \\
    \label{eq:quad_dynamics_rotational}
    \dot{R}_i &= R_i \Omega^{\times}_i, 
    \ \
   & J \dot{\Omega}_i &= -\Omega_i \times J \Omega_i + \boldsymbol{\tau}_i,
\end{align}
\end{subequations}
where ${p}_i \in \real^3$ and $R_i(\psi_i, \phi_i, \theta_i) \in \SO $ are the position and orientation of the $i$-th quadrotor in the inertial frame $\gframe$, $m$ is the mass of the quadrotor, $g$ is the gravitational acceleration, and finally $f_i \in \bframe$ is the total thrust. Also, in the rotational dynamics, $ \Omega_i \in \real^3 $ is the angular velocity, $J \in \real^{3 \times 3}$ is the inertia matrix, and $\boldsymbol{\tau}_i \in \real^3$ is the total torque, all expressed in respective body-fixed frames. Finally, the notation $\Omega^{\times}_i$ denotes the skew-symmetric matrix, such that $\Omega^{\times}_i r = \Omega_i \times r$ for any vector $r \in \real^3$ and the cross product $\times$.
\begin{figure}[t]
    \centering
    \includegraphics[width=.9\linewidth]{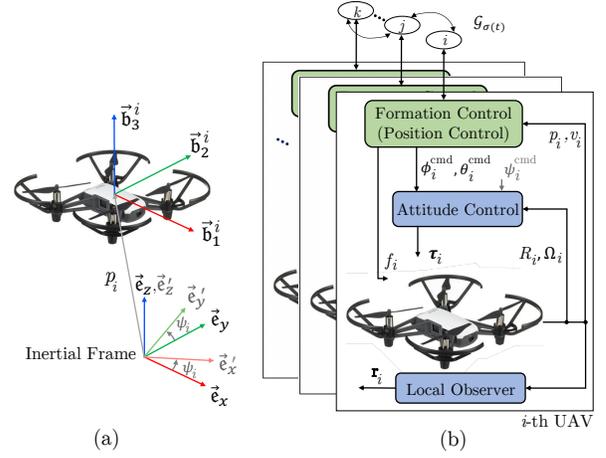}
    \caption{ 
    (a) Illustration of reference frames. (b) The coordination control architecture.
    } 
    \label{fig:quad_coord}
    \vspace*{-1ex}
\end{figure}
\subsection{Formation control}\label{Sec:formation_control}
The cooperative control of quadrotor UAVs, shown in Figure \ref{fig:quad_coord}b, follows a hierarchical structure, where at the high level, the UAVs coordinate with each other and their formation/position controller cooperatively generates the desired attitude/orientation and the desired total thrust for a low-level attitude controller.
In this paper, we focus on 2D formation in the $x\!-\!y$ plane, for which cooperative control protocols will be designed based on a linearized model of the UAVs' transnational dynamics in \eqref{eq:quad_dynamics_translational} around a hovering state and under small-angle approximations as follows \cite{mahony2012multirotor}:
\begin{subequations}\label{eq:linearized_quad_dynamics_old}
\begin{align}
    \label{eq:linearized_quad_dynamics_x_old}
    \ddotpxi &=  g (\Delta \theta_i \cos(\psi_i) +  \Delta \phi_i \sin(\psi_i)),
    \\
    \label{eq:linearized_quad_dynamics_y_old}
    \ddotpyi &= g(\Delta \theta_i \sin(\psi_i) -  \Delta \phi_i \cos(\psi_i),
    \\
    \label{eq:linearized_quad_dynamics_z_old}
    \ddotpzi &= - g + {f_i}/{m},
\end{align}
\end{subequations}
in which $\Delta \theta_i$ and $\Delta \phi_i$ denote, respectively, the deviation of pitch and roll angles of the $i$-th quadrotor from their equilibrium point $ \theta_i= \phi_i=0 $. Associated with each UAV, we define an intermediary frame $\iframe$ with unit axes $\iframeAxes$ and orientation $R_z(\psi_i)$ such that $p_i = R_z(\psi_i) p'_i$ for vectors $p_i \in \gframe$ and $p'_i \in \iframe$ (see Figure \ref{fig:quad_coord}a). Assuming all UAVs have consensus on a desired yaw angle $\psi_i = \psi^*,\ \forall \, i \in \Vc $, $\iframe$ will be the common reference frame of the UAVs in which the linearized equation of motion in \eqref{eq:linearized_quad_dynamics_old} can be represented by
\begin{subequations}\label{eq:linearized_quad_dynamics}
\begin{align}
    \label{eq:linearized_quad_dynamics_x}
    \ddotpxip &=  + g \Delta \theta_i ,
    \\
    \label{eq:linearized_quad_dynamics_y}
    \ddotpyip &= -g   \Delta \phi_i ,
    \\
    \label{eq:linearized_quad_dynamics_z}
    \ddotpzip &= - g + {f_i}/{m},
\end{align}
\end{subequations}
that shows the decoupled dynamics in the $x'_i,y'_i,z'_i$ directions\footnote{We will omit the superscript $'$ in the rest of paper for notational simplicity.}. We let the reference commands for pitch and roll angles in \eqref{eq:linearized_quad_dynamics_x}-\eqref{eq:linearized_quad_dynamics_y} be
\begin{align}\label{eq:cmd_pitch_roll}
 \theta_i^{\rm cmd} &= + u_{i}^x / g, \ \
 \phi_i^{\rm cmd}   = - u_{i}^y / g,
\end{align}
%
where $u_{i}^x$ and $u_{i}^y$ are the formation control inputs to be designed, respectively, in the $x$ and $y$ directions. It is also necessary to mention that $ \theta_i^{\rm cmd}$ and $ \phi_i^{\rm cmd}$ in \eqref{eq:cmd_pitch_roll} will be desired setpoints for each UAV's low-level (on-board) attitude controller, and that we use independent PID controllers to stabilize the altitude of quadrotors ($z_i$-dynamics in \eqref{eq:linearized_quad_dynamics_z_old}) around a desired hovering point. Therefore, the altitude dynamics in \eqref{eq:linearized_quad_dynamics_z_old} and the rotational dynamics in \eqref{eq:quad_dynamics_rotational} are dropped from the high-level state space of networked UAVs and the reduced-order planar dynamics is obtained by
substituting \eqref{eq:cmd_pitch_roll} for $\Delta \theta_i$ and $\Delta \phi_i$ in \eqref{eq:linearized_quad_dynamics_x} and \eqref{eq:linearized_quad_dynamics_y} as follows:
\begin{align}\label{eq:ol_sys}
\Sigma_{i}:
\left\{
\begin{array}{l}
\dot{\pb}_i(t) = \vb_i(t)\\ 
\dot{\vb}_i(t) = \ub_{i}(t) 
\end{array},
\right. 
\ \  i \in \mathcal{V} = \{1,\dots, N\} , 
\end{align}
in which $ \boldsymbol{p}_i(t) \coloneqq \col( \pxi, \pyi) \in \real^2,$ and $ \vb_i(t) \coloneqq \col(\dotpxi, \dotpxi) \in \real^{2} $ are the stacked positions and velocities in the $x$ and $y$ directions, and $\ub_{i}(t) \coloneqq \col( u_{i}^x, u_{i}^y)  \in \real^{2}  $ denotes their corresponding control input for each UAV. 

\noindent
\emph{{Desired formation reference}.} 
We define a desired configuration (formation shape) by specifying a set of $N$ desired setpoints $\pstar_{1}, \pstar_2, \dots, \pstar_{\s N}$ in $ \real^{2}$ that form the desired relative positions\footnote{In the context of formation control, these reference states are called \emph{formation states} \cite{ren2008distributed} or \emph{shape vectors}  \cite{turpin2012decentralized} depending on the design methods and their underlying assumptions.} $\{ \pstar_{ij}= \pstar_{i}-\pstar_{j}  \in \real^{2} \mid \forall \, i,j \in \Vc ,\ i\neq j\}$, all expressed in the UAVs' common frame. The formation references are transmitted to the UAVs from a ground control center.
We follow the consensus-based formation settings \cite{ren2008distributed} where the UAVs coordinate their relative positions to reach the desired relative positions $\pstar_{ij}$'s, which is formulated as
\begin{subequations}\label{eq:formation_consensus}
\begin{align} \label{eq:formation_consensus_pos}
\lim \limits_{t\rightarrow{\infty}} \left|\pb_i(t)-\pb_j(t) - \pstar_{ij} \right| &= \boldsymbol{0},
  &\forall \; i,j \in \mathcal{V},
\\ \label{eq:formation_consensus_vel}
\lim \limits_{t\rightarrow{\infty}} \left|\vb_i(t) \right| &= \boldsymbol{0}, 
  &\forall \; i \in \mathcal{V}.
\end{align}
\end{subequations}

\noindent
\emph{{Inter-UAV communication.}} 
We model the switching inter-UAV communication by an undirected graph $ \Gc_{\sigt}=(\Vc,\Ec_{\sigt}) $, where the vertex set $\Vc=\{1,\dots, N\}$ represents the index set of $N$ UAVs in \eqref{eq:quad_dynamics} (with their respective reduced models in \eqref{eq:ol_sys}), and the edge set $\Ec_{\sigt} \subseteq \Vc \times \Vc$ represents the communication links such that an edge $(i,j)\in \Ec_{\sigt}$ implies information exchange between the $i$-th and $j$-th UAV in a given active mode determined by the right-continuous switching signal $\sigt : \mathbb{R}_{\geq 0}\rightarrow{\Qc:=\{1,2,\dots,q\}}, \, q \in \naturals,$ at time $t$, with $\Qc$ being the finite index set of possible communication graphs.
The UAVs' interaction is further encoded into a symmetric adjacency matrix $\adj_{\sigt} \coloneqq  [a^{\sigt}_{ij}]\in \mathbb{R}^{ N \times N}_{\geq0}$ such that $a^{\sigt}_{ij} = a^{\sigt}_{ji} = 1 $ if an edge $ (i,j) \in \Ec_{\sigt}$, and $a^{\sigt}_{ij}= a^{\sigt}_{ji} = 0$, otherwise.
Also, the set of the neighbors of the $ i $-th node (UAV) in any active mode $\sigma(t)$ is defined as $ \Nc_{\sigma(t)}^i=\{j\in \mathcal{V} \mid (i,j)\in \Ec_{\sigt} \} $.

Throughout this paper, we assume the inter-UAV's communication graphs $\Gc_{\sigt}$'s are connected in all modes. 

To meet the formation constraints in \eqref{eq:formation_consensus}, we use the following consensus-based distributed control protocol:
\noindent
\begin{align}\label{eq:ctrl_proto}
    \ub_i &=  \ub_{{\rm{n}}_i} + {\ub}_{{\rm a}_i}, \quad i \in \Vc,
    \\ 
    \label{eq:ctrl_proto_normal}
    \ub_{{\rm{n}}_i} &= 
    -\alpha
    \hspace{-1ex}
    \sum_{j \in \Nc^{i}_{\sigma(t)}}
    \hspace{-1ex} 
    a^{\sigma(t)}_{ij}(\pb_i-\pb_j -\pstar_{ij} ) 
    - \gamma \vb_i 
    \vspace*{-1ex}
\end{align}
%
where $\ub_{{\rm{n}}_i} \in \real^{2} $ denotes the nominal control input with $ a^{\sigma(t)}_{ij} $ being the entry of the symmetric adjacency matrix associated with the UAVs' switching communication graph $ \mathcal{G}_{\sigma(t)} $. Also, $\alpha \in \realpos $ and $\gamma \in \realpos $ are the control gains. $\ub_{{\rm a}_i} \in \real^{2} $ is the vector-valued malicious signal injected in the control channel of the $i$-th UAV. 

We let an unknown subset $\Ac =\{i_1, i_2, \dots \} \subset \Vc$ denote the set of UAVs subject to attack  ${\ub}_{{\rm a}_i} \neq \boldsymbol{0} $, which we refer to as compromised UAVs, and we refer to the rest of UAVs $ \Sigma_i\text{'s with}\; {\ub}_{{\rm a}_i} = \boldsymbol{0},\; \forall \, i \in \Vc \setminus {\Ac} $, in \eqref{eq:ol_sys} as uncompromised UAVs.

\noindent
\emph{{Network-level dynamics.}} 
Given \eqref{eq:ol_sys}, \eqref{eq:ctrl_proto} and \eqref{eq:ctrl_proto_normal}, the dynamics of the networked UAVs can be represented by
\begin{align}\label{eq:cl_sys}
\Sigma:
\dot{\mb{x}}
    =
    \mb{A}_{\sigma(t)}
    \mb{x}
    +
    \mb{B}^{\boldsymbol{F}}_{\sigma(t)} \bxtar
    + 
  \mb{B}_{\! \Ac} \ua, \ \ \mb{x}(t_0)=\mb{x}_0, 
\end{align}
in which, the system states and matrices are given by
\begin{subequations}\label{eq:sys_matrices_states}
\begin{align}
\mb{x}(t) &= \col \paren{ \pb_1,\dots,\pb_{\s N},\vb_1,\dots,\vb_{\s N}} \in \real^{4N}, 
\\
{\bxtar} &= \col \paren{ \pstar_{1}, \dots, \pstar_{\s N}, \boldsymbol{0}_{2} , \dots , \boldsymbol{0}_{2}} \in \real^{4N},
\\
\mb{A}_{\sigma(t)}
   &=
    A_{\sigt} \!\otimes\! I_2, 
    \, 
    \mb{B}^{\boldsymbol{F}}_{\sigma(t)} \!=\! - \mb{A}_{\sigma(t)},
    \,
    \mb{B}_{\! \Ac} \!=\! B_{\! \Ac} \!\otimes\! I_2,
    \\ \label{eq:A_B_matrices}
      A_{\sigt} &=
    \begin{bmatrix}
    0_{N \times N} & I_{N} 
    \\
    -\alpha \lap_{\sigma(t)} &  -\gamma I_{N}
    \end{bmatrix},
     \; 
     B_{\! \Ac} = \begin{bmatrix}
    0_{} \\
    {\rm B}_{\! \Ac} 
    \end{bmatrix},
   \\ \label{eq:sys_matrices__attack_sig}
     {\rm B}_{\! \Ac} &= \big[\efrak_{i_1}\; \efrak_{i_2}\; \dots\; \efrak_{i_{\s |\Ac|}}\big], 
     \ \ 
     \ua = \col \paren{ {\ub}_{{\rm a}_{i}} }_{i \in \Ac},
\end{align}
\end{subequations}
where $ \lap_{\sigma(t)} $ is the Laplacian matrix of graph $\Gc_{\sigt}$, encoding the inter-UAVs' communication links, defined as $\lap_{\sigt} \coloneqq  [l^{\sigt}_{ij}] \in \real^{N \times N}$ with $ l^{\sigt}_{ii} = \sum_{j \neq i}^{}a^{\sigt}_{ij}$ and $ l^{\sigt}_{ij} = -a^{\sigt}_{ij}$ if $i \neq j$.
$\mathfrak{e}_i$ is the $i$-th vector of the canonical basis in $\real^N$ corresponding to the $i$-th UAV compromised by attack ${\ub}_{{\rm a}_i},\ i \in \Ac$.

\noindent
\emph{{Network-level measurements.}} We define the system measurements $\mb{y}$ to be composed of the position of a set of UAVs $\Mc_{\rm p} =\{{\rm p}_1, {\rm p}_2, \dots\} \subset \Vc$ and/or the velocity of a set of UAVs $\Mc_{\rm v} =\{{\rm v}_1, {\rm v}_2, \dots\} \subset \Vc$ that are transmitted to a ground control center for monitoring. More precisely,
\begin{subequations}\label{eq:measurments}
\begin{align}\label{eq:measurments_a}
\mb{y} &= {\mb{C}}{\mb{x}}-\us, \ \  \mb{C} = C \otimes I_2, \ \ \Mc=\{\Mc_{\rm p}, \Mc_{\rm v} \},
\\\label{eq:measurments_b}
 {C}&=\diag\paren{C_{\rm p}, C_{\rm v}},
 \\
  {C}_{\rm p} &= \col \big( \efrak^{\s \top}_{{\rm p}_1}, \efrak^{\s \top}_{{\rm p}_2}, \dots, \efrak^{\s \top}_{{\rm p}_{\s {|\Mc_{\rm p}|}}} \big) \in \real^{\s |\Mc_{\rm p}| \times N}, 
\\
{C}_{\rm v} &= \col \big(\efrak^{\s \top}_{{\rm v}_1}, \efrak^{\s \top}_{{\rm v}_2}, \dots, \efrak^{\s \top}_{{\rm v}_{\s {|\Mc_{\rm v}|}}} \big) \in \real^{\s |\Mc_{\rm v}| \times N}, 
\end{align}
\end{subequations}
%
where $\us =\col \big( \mb{u}_{{\rm s}_1}, \mb{u}_{{\rm s}_2}, \dots , \mb{u}_{{\rm s}_{\s |\Mc|}} \big) \! \in \! \real^{\s 2|\Mc|}$ denotes the vector-valued sensory attacks on the measurements.

\begin{proposition}\btitle{Formation convergence}\label{prop:formation_convergence}
Assume that the formation configuration is feasible and that the communication graph is connected in each mode. 
Then, under the control protocol \eqref{eq:ctrl_proto}, and in the absence of attacks, the states of the UAVs in \eqref{eq:ol_sys} converge to the desired formation configuration in \eqref{eq:formation_consensus}.
\end{proposition}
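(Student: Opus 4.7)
The plan is to recast \eqref{eq:cl_sys} in error coordinates, reduce the claim to a switched second-order consensus problem, and then argue exponential convergence on the disagreement subspace. First, define $\mb{e}(t) := \mb{x}(t)-\bxtar$; since feasibility gives that $\bxtar$ is a constant reference in the UAVs' common frame and $\mb{B}^{\boldsymbol{F}}_{\sigma(t)} = -\mb{A}_{\sigma(t)}$, the attack-free version of \eqref{eq:cl_sys} collapses to $\dot{\mb{e}} = \mb{A}_{\sigma(t)}\mb{e}$. Partitioning $\mb{e} = \col(\mb{e}_p,\mb{e}_v)$ with $\mb{e}_p := \pb-\pstar$ and $\mb{e}_v := \vb$, and using $\mb{A}_{\sigma(t)} = A_{\sigma(t)}\otimes I_2$ from \eqref{eq:A_B_matrices}, this is equivalent to the switched second-order equation $\ddot{\mb{e}}_p + \gamma \dot{\mb{e}}_p + \alpha(\lap_{\sigma(t)}\otimes I_2)\mb{e}_p = \boldsymbol{0}$. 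In these coordinates the formation conditions \eqref{eq:formation_consensus_pos}--\eqref{eq:formation_consensus_vel} become $(\Pi\otimes I_2)\mb{e}_p(t)\to\boldsymbol{0}$ and $\mb{e}_v(t)\to\boldsymbol{0}$, where $\Pi := I_N-\tfrac{1}{N}\boldsymbol{1}_N\boldsymbol{1}_N^{\top}$ projects onto the disagreement subspace $\boldsymbol{1}_N^{\perp}$.

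Next, split $\mb{e}_p$ orthogonally into its average part $\bar{\mb{e}}_p := \tfrac{1}{N}(\boldsymbol{1}_N\boldsymbol{1}_N^{\top}\otimes I_2)\mb{e}_p$ and disagreement part $\tilde{\mb{e}}_p := (\Pi\otimes I_2)\mb{e}_p$. Because $\lap_{\sigma(t)}\boldsymbol{1}_N=\boldsymbol{0}$ in every mode, the average obeys the switching-independent equation $\ddot{\bar{\mb{e}}}_p + \gamma\dot{\bar{\mb{e}}}_p = \boldsymbol{0}$, so $\dot{\bar{\mb{e}}}_p$ decays exponentially to zero for $\gamma>0$. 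The disagreement part stays in the invariant subspace $\boldsymbol{1}_N^{\perp}\otimes\real^2$ and satisfies the same switched equation, but restricted to that subspace each $\lap_{\sigma(t)}$ is positive definite; per-mode connectivity and finiteness of $\Qc$ then provide the uniform bounds $\lambda_{\min}\|w\|^2 \le w^{\top}\lap_{\sigma(t)}w \le \lambda_{\max}\|w\|^2$ for $w\in\boldsymbol{1}_N^{\perp}$, with $\lambda_{\min} := \min_{q\in\Qc}\lambda_2(\lap_q)>0$ and $\lambda_{\max} := \max_{q\in\Qc}\|\lap_q\|$.

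The main obstacle is certifying exponential decay of the disagreement part uniformly in $\sigma(t)$. I would handle this via a common quadratic Lyapunov function of the form $V(\tilde{\mb{e}}_p,\dot{\tilde{\mb{e}}}_p) := \tfrac{1}{2}\|\dot{\tilde{\mb{e}}}_p\|^2 + \tfrac{c_1}{2}\|\tilde{\mb{e}}_p\|^2 + c_2\,\tilde{\mb{e}}_p^{\top}\dot{\tilde{\mb{e}}}_p$, choosing positive constants $c_1,c_2$ so that $V$ is positive definite. Computing $\dot V$ along the disagreement dynamics produces the $\sigma$-dependent cross term $-\alpha\dot{\tilde{\mb{e}}}_p^{\top}(\lap_{\sigma(t)}\otimes I_2)\tilde{\mb{e}}_p$; this is the technical heart, and I would dominate it by the negative contributions $-\gamma\|\dot{\tilde{\mb{e}}}_p\|^2$ and $-\alpha c_2\,\tilde{\mb{e}}_p^{\top}(\lap_{\sigma(t)}\otimes I_2)\tilde{\mb{e}}_p \le -\alpha c_2\lambda_{\min}\|\tilde{\mb{e}}_p\|^2$ via Young's inequality, with the admissible range of $c_1,c_2$ fixed in terms of $\alpha,\gamma,\lambda_{\min},\lambda_{\max}$. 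This delivers $\dot V \le -\rho V$ for some $\rho>0$ independent of the active mode, hence exponential decay of $(\tilde{\mb{e}}_p,\dot{\tilde{\mb{e}}}_p)$. Combined with the exponential decay of $\dot{\bar{\mb{e}}}_p$, this yields $\mb{e}_v\to\boldsymbol{0}$ and $(\Pi\otimes I_2)\mb{e}_p\to\boldsymbol{0}$, i.e.\ \eqref{eq:formation_consensus}. Alternatively, once the reduction to second-order switched consensus is in place, one may simply invoke the switching-topology consensus results of~\cite{ren2008distributed} to conclude.
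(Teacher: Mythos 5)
Your first step is exactly the paper's: the shift $\mb{e}=\mb{x}-\bxtar$, the collapse to $\dot{\mb{e}}=\mb{A}_{\sigma(t)}\mb{e}$ thanks to $\mb{B}^{\boldsymbol{F}}_{\sigma(t)}=-\mb{A}_{\sigma(t)}$, and the reduction to the switched damped consensus equation is the ``change of variables'' the paper attributes to the cited consensus reference, and the average/disagreement split and the reformulation of \eqref{eq:formation_consensus} as $(\Pi\otimes I_2)\mb{e}_p\to\boldsymbol{0}$, $\mb{e}_v\to\boldsymbol{0}$ are all correct. The paper then defers the convergence analysis to the cited works; you attempt to supply it, and that is where there is a genuine gap. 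With $V=\tfrac12\|\dot{\tilde{\mb{e}}}_p\|^2+\tfrac{c_1}{2}\|\tilde{\mb{e}}_p\|^2+c_2\,\tilde{\mb{e}}_p^{\top}\dot{\tilde{\mb{e}}}_p$ and (say) $c_1=\gamma c_2$, one gets $\dot V=-\alpha\,\dot{\tilde{\mb{e}}}_p^{\top}(\lap_{\sigma}\otimes I_2)\tilde{\mb{e}}_p-(\gamma-c_2)\|\dot{\tilde{\mb{e}}}_p\|^2-\alpha c_2\,\tilde{\mb{e}}_p^{\top}(\lap_{\sigma}\otimes I_2)\tilde{\mb{e}}_p$, and absorbing the indefinite cross term by Young's inequality requires simultaneously $c_2>\varepsilon/2$ and $\gamma-c_2>\alpha\lambda_{\max}/(2\varepsilon)$ for some $\varepsilon>0$, which is feasible only when roughly $\gamma^2>\alpha\lambda_{\max}$. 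Since the proposition is stated for arbitrary $\alpha,\gamma\in\realpos$, the ``admissible range of $c_1,c_2$'' you invoke can be empty: the feasibility of your own inequalities is precisely a gain condition that you never verify, so the step ``this delivers $\dot V\le-\rho V$'' does not follow as written.

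The obstruction is structural, not just bookkeeping: the cross term cancels exactly only against the mode-dependent potential $\tfrac{\alpha}{2}\tilde{\mb{e}}_p^{\top}(\lap_{\sigma}\otimes I_2)\tilde{\mb{e}}_p$, and a switched spring--damper system with fixed damping and switching stiffness need not admit any common quadratic Lyapunov function when $\gamma$ is small relative to $\alpha\lambda_{\max}$ (energy can be pumped at switching instants). So under truly arbitrary switching the argument needs an extra ingredient: either (i) state the gain condition your common $V$ requires, (ii) use the per-mode energy function $\tfrac12\|\dot{\tilde{\mb{e}}}_p\|^2+\tfrac{\alpha}{2}\tilde{\mb{e}}_p^{\top}(\lap_{\sigma}\otimes I_2)\tilde{\mb{e}}_p$ plus a small cross term, which works for all positive gains mode-by-mode, combined with a dwell-time or multiple-Lyapunov-function argument (natural here, since in this scheme switches are detector-triggered and finite in number), or (iii) restrict attention to the fixed initial topology, which is all the attack-free nominal operation in \eqref{eq:cl_sys} really uses. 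Your fallback of citing the consensus literature does not close the hole either, since the standard switching-topology consensus theorems there are for single-integrator dynamics; second-order consensus under switching graphs is exactly the delicate case.
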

\begin{proof}
\ifshort The proof follows a change of variables as in \cite{ren2007consensus} and a convergence analysis similar to that in \cite{mao2020novel}. The details are omitted here due to space limitation.
\else
See Appendix \ref{app_formation_convergence}. \fi
\end{proof}

Note that the UAV's dynamics in \eqref{eq:ol_sys} as well as the control protocol \eqref{eq:ctrl_proto_normal} for the $x$ and $y$ directions are decoupled. Thus, for notational simplicity, we may use the following
\begin{subequations}\label{eq:cl_sys_scalar}
\begin{align}\label{eq:cl_sys_scalar_dyn}
\Sigma:
\dot{\rm{x}}
    &=
    {A}_{\sigma(t)}
    {\rm x}
    +
    {B}^{\boldsymbol{F}}_{\sigma(t)} \rxtar
    + 
  {B}_{\! \Ac} \uar, \ \ {\rm{x}}(t_0)={\rm{x}}_0, 
  \\\label{eq:cl_sys_scalar_measurements}
  {\rm y} &= {{C}}{\rm{x}}-\usr,
\end{align}
\end{subequations}
to represent the dynamics in \eqref{eq:cl_sys} with its monitored states in \eqref{eq:measurments_b} in only one direction of the $x\!-\!y$ plane. Accordingly, $\xr=\col \paren{\pr,\vr} \in \real^{2N}$ in \eqref{eq:cl_sys_scalar} denotes the stacked vector of all positions $\pr = \col \paren{\pr_i}_{i \in \Vc}$ and velocities $\vr = \col \paren{\vr_i}_{i \in \Vc}$ in one direction with their corresponding formation references $\rxtar \in \real^{2N} $ as well as attack inputs $\uar \in \real^{\s |\Ac|} $ and $\usr \in \real^{\s |\Mc|}$, and other system matrices are given in \eqref{eq:A_B_matrices}-\eqref{eq:sys_matrices__attack_sig} and \eqref{eq:measurments_b} with ${B}^{\boldsymbol{F}}_{\sigma(t)} = - {A}_{\sigma(t)}$.
%
\subsection{Attack stealthiness}
Motivated by the susceptibility of wireless networks to adversarial intrusions \cite{manesh2019cyber,yaacoub2020security}, we study the worst-case scenario adversarial settings where an attacker leverages \emph{a priori} system knowledge of the UAVs' coordination or a prerecorded sequence of sensory data to design sophisticated stealthy attacks implementable through actuator attacks ${\ub}_{{\rm a}_i}\!(t)\text{'s},\ i \in \Ac$ in \eqref{eq:ctrl_proto} and sensor attacks $\us(t)$ in \eqref{eq:measurments_a}.

Here, \emph{a priori} system knowledge refers to the initial configuration of the networked system \eqref{eq:cl_sys} with the measurements \eqref{eq:measurments} (or equivalently  \eqref{eq:cl_sys_scalar}), denoted by the tuple $\hat{\Sigma}(\hat{\mb{A}}_{\sigt}, \hat{\mb{C}}, \sigma(t)=1)$ with $\hat{\mb{A}}_{\sigt}$ and $\hat{\mb{C}}$ being the approximations of their counterparts in \eqref{eq:cl_sys} and \eqref{eq:measurments}. The amount of \emph{a priori} system knowledge needed for designing \emph{stealthy} attacks varies for different attacks \cite{teixeira2015secure}, and will be quantified in Section \ref{Sec:stealthy_attacks_realization}.

\emph{Stealthy} attacks refer to a class of adversarial intrusions  (cyber attacks \cite{smith2015covert,mo2009secure,9683427}) ${\ub}_{{\rm a}_i}\text{'s},\ i \in \Ac$ in \eqref{eq:ctrl_proto} and $\us$ in \eqref{eq:measurments_a} that disrupt the system's normal operation while remain stealthy in the monitored measurements \eqref{eq:measurments}, that is 
\begin{equation}\label{eq:stealthy}
   \mb{y}_{}(\mb{x}_0,\ua,\us,t)=\mb{y}^{\rm n}({\mb{x}}^{\rm n}_0,\mb{0},\mb{0},t),\quad \forall \, t\in [t_0,t_{d}), 
\end{equation}
where $\mb{y}^{\rm n}({\mb{x}}^{\rm n}_0,\mb{0},\mb{0},t)=\mb{C}\mb{x}^{\rm n}$ is the output associated with an attack-free system with the same dynamics as in \eqref{eq:cl_sys_scalar_dyn}, and $ \mb{x}_0$ and ${\mb{x}}^{\rm n}_0$ are the actual and a possible initial states, respectively. Also, $t_0$ is the initial time instant and $t_d$ is the attack detection time instant, i.e., the time instant at which condition in \eqref{eq:stealthy} no longer holds and attacks lose their stealthiness.

\subsection{Problem statement: attack detection}
We consider the attack detection problem as a hypothesis testing problem with the null and alternative hypotheses
\begin{align}\label{eq:hypothesis_testing}
    \nullH : {\rm attack}\text{-}{\rm free},\ \text{vs.} \ \
    \alterH: {\rm attacked},
\end{align}
for which we present detection frameworks in Section \ref{Sec:detection_framework}.

\section{Observer Design and Analysis for Attack Detection}
In this section, we characterize the models for stealthy attacks on the networked UAVs in \eqref{eq:cl_sys} and develop centralized and decentralized detection schemes.

\subsection{Realization of stealthy attacks}\label{Sec:stealthy_attacks_realization}
Given system in \eqref{eq:cl_sys}, let $\Mc$ in \eqref{eq:measurments_a} be a set of monitored states and let $\Ac$ be a set of compromised UAVs subject to attack  ${\ub}_{{\rm a}_i} \neq \boldsymbol{0} $ in \eqref{eq:ctrl_proto}. In what follows, we characterize stealthy attacks in terms of different realizations of \eqref{eq:stealthy}.

\noindent
\emph{{Zero-dynamics attack (ZDA).}} 
ZDA refers to the class of attacks based on the zero dynamics of the system $({A}_{\sigt},{B}_{\! \Ac},{C}, \sigt=1)$ in \eqref{eq:cl_sys_scalar} that are (nontrivial) state trajectories excited through input directions $B_{\! \Ac}$ and invisible at the output ${\rm{y}}$, and that can be characterized by the rank deficiency of matrix pencil

\begin{equation}\label{eq:zda_pencil}
P(\lambda_{o}) = 
    \begin{bmatrix}
        \lambda_{o} I_{N} -{A}_{1} & -{B}_{\! \Ac} \\ {C}_{} & \boldsymbol{0}_{}
    \end{bmatrix},
\end{equation}
for some $\lambda_{o} \in \realpos$ \cite{mao2020novel}.

\begin{proposition}\label{prop:ZDA_design}
Assume the system in \eqref{eq:cl_sys} in its initial active mode $\sigt = 1$ has unstable zero dynamics, i.e., the matrix pencil $P(\lambda_{o})$ in 
\eqref{eq:zda_pencil}
is rank deficient for some $\lambda^x_{o},\  \lambda^y_{o}, \in \realpos$, and that the attacker's a priori knowledge of the system   $\hat{\Sigma}(\hat{\mb{A}}_{\sigt}, \hat{\mb{C}}, \sigma(t)=1) = ({\mb{A}}_{\sigt}, {\mb{C}}, \sigma(t)=1) $. Then,
there exists a stealthy attack policy
\begin{align}\label{eq:ZDA_signal}
    \ua \!=\! \col \paren{ {\ub}_{{\rm a}_{i}} }_{i \in \Ac},\  {\ub}_{{\rm a}_{i}} 
    \!=\!
    [{\rm u}^x_{{\rm a}_i}(0)e^{\lambda^x_{o}t}\ \ {\rm u}^y_{{\rm a}_i}(0)e^{\lambda^y_{o}t}\ ]^{\top}\!,
\end{align}
in dynamics \eqref{eq:cl_sys} that causes part of system states exponentially deviate from the formation configuration in \eqref{eq:formation_consensus} while the condition in \eqref{eq:stealthy} holds. In this attack model, the measurement signals are not compromised, i.e., $\us = \boldsymbol{0}$.
\end{proposition}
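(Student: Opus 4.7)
The plan is to exploit the rank deficiency of the matrix pencil $P(\lambda_o)$ in \eqref{eq:zda_pencil} to produce an exponentially growing, output-invisible state trajectory, then lift this construction to the full two-dimensional dynamics by exploiting the decoupling of the $x$- and $y$-directions in \eqref{eq:cl_sys} and invoke superposition to establish stealthiness in the sense of \eqref{eq:stealthy}.

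First, for a single coordinate direction (say $x$), the rank deficiency of $P(\lambda^x_o)$ guarantees the existence of a nonzero vector $(\eta^x,\, g^x) \in \real^{2N} \times \real^{|\Ac|}$ such that $(\lambda^x_o I - A_1)\eta^x = B_{\!\Ac}\, g^x$ and $C\eta^x = 0$. Choosing the scalar-direction attack input ${\rm u}^x_{{\rm a}}(t) = g^x e^{\lambda^x_o t}$ and initializing the \emph{attack-induced} component of the state as $\eta^x$, one verifies by direct differentiation that $\xi^x(t) := \eta^x e^{\lambda^x_o t}$ satisfies $\dot{\xi}^x = A_1 \xi^x + B_{\!\Ac} {\rm u}^x_{{\rm a}}$ with $C\xi^x(t) \equiv 0$ for all $t$. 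Reading off the entries of $g^x$ yields the initial values ${\rm u}^x_{{\rm a}_i}(0)$ appearing in \eqref{eq:ZDA_signal}; the same construction with $\lambda^y_o$ produces ${\rm u}^y_{{\rm a}_i}(0)$. Since the $x$- and $y$-dynamics in \eqref{eq:cl_sys_scalar} are decoupled, stacking the two directions yields the claimed $\ua$ in \eqref{eq:ZDA_signal}.

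Next, by linearity of \eqref{eq:cl_sys}, the actual trajectory admits the decomposition $\mb{x}(t) = \mb{x}^{\rm n}(t) + \boldsymbol{\xi}(t)$, where $\mb{x}^{\rm n}$ solves the attack-free system in \eqref{eq:cl_sys} with initial condition $\mb{x}_0 - \boldsymbol{\xi}(t_0)$ and forcing only by $\bxtar$, and $\boldsymbol{\xi}(t) = \col(\eta^x, \eta^y) \otimes $ (the appropriate exponential factor) is the zero-output mode constructed above. Setting $\mb{x}^{\rm n}_0 := \mb{x}_0 - \boldsymbol{\xi}(t_0)$ in \eqref{eq:stealthy}, one obtains $\mb{y}(t) = \mb{C}\mb{x}^{\rm n}(t) + \mb{C}\boldsymbol{\xi}(t) = \mb{y}^{\rm n}(\mb{x}^{\rm n}_0,\mb{0},\mb{0},t)$ for all $t \ge t_0$ (so in particular $t_d = \infty$ while $\sigma(t)=1$), confirming stealthiness with $\us = \boldsymbol{0}$. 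Moreover, $\lambda^x_o,\lambda^y_o > 0$ ensures $\|\boldsymbol{\xi}(t)\|$ grows without bound, so the components of $\mb{x}$ lying outside $\ker(\mb{C})$-orthogonal directions — in particular the inter-agent position differences appearing in \eqref{eq:formation_consensus_pos} — are driven exponentially away from the formation manifold established by Proposition \ref{prop:formation_convergence}.

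The main obstacle I anticipate is cleanly separating the attack-induced mode from the attack-free nominal response while accounting for the formation reference $\bxtar$ and verifying that $\boldsymbol{\xi}(t)$ indeed perturbs the formation error rather than living entirely in the directions rendered invariant by the formation objective. This is handled by noting that the pencil condition $C\eta = 0$ only removes the monitored components, whereas the formation constraint in \eqref{eq:formation_consensus_pos} involves pairwise differences across all agents; hence nontriviality of $\eta$ combined with rank deficiency of $P(\lambda_o)$ forces some pairwise differences $\pb_i - \pb_j$ to be perturbed, completing the claim.
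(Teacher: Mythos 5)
Your construction is correct and is essentially the paper's own route: the paper omits the argument and defers to the standard zero-dynamics-attack construction in the cited reference (Prop. II.4 of \cite{9683427}), which is exactly the pencil-null-vector mode $\xi(t)=\eta e^{\lambda_o t}$ with input $g e^{\lambda_o t}$, output invisibility $C\eta=0$, and the shifted nominal initial condition $\mb{x}^{\rm n}_0=\mb{x}_0-\boldsymbol{\xi}(t_0)$ that you give. One small tightening: the deviation claim in your last paragraph is closed more cleanly by noting that the first block row of $(\lambda_o I-A_1)\eta=B_{\!\Ac}g$ gives $\eta_v=\lambda_o\eta_p$, so $\eta\neq 0$ forces a nonzero, exponentially growing velocity component, which already violates \eqref{eq:formation_consensus_vel} without needing to argue about which pairwise position differences are perturbed.
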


\begin{proof} The proof follows from \cite[Prop. II.4]{9683427} and so is omitted here.
\end{proof}

It is noteworthy that the assumption on \emph{a priori} system knowledge in Proposition \ref{prop:ZDA_design} can be relaxed. In the cases that only a subset of the system model as \emph{a priori} is disclosed to the attacker, that is $\hat{\Sigma}(\hat{\mb{A}}_{\sigt}, \hat{\mb{C}}, \sigma(t)=1) \approx ({\mb{A}}_{\sigt}, {\mb{C}}, \sigma(t)=1) $, a ZDA can be realized that only affects the UAVs within the known subset of the system, which is known as local ZDA~\cite{teixeira2015secure}.

\noindent
\emph{{Covert attack}}~\cite{smith2015covert}. Covert attacks are a class of intrusions through input channels ${\mb{B}_{\! \Ac}}$ whose covertness at the output is obtained by alteration of the measurement signals \eqref{eq:measurments_a} and whose realization requires perfect knowledge of the system i.e. $\hat{\Sigma}(\hat{\mb{A}}_{\sigt}, \hat{\mb{C}}, \sigma(t)=1) = ({\mb{A}}_{\sigt}, {\mb{C}}, \sigma(t)=1) $. Let attack policy $\ua (t)= \col \paren{ {\ub}_{{\rm a}_{i}} }_{i \in \Ac} : \realnonneg \mapsto \real^2 $ in \eqref{eq:cl_sys} be any continuous signal initiated at time instant $t_{\rm a} \in \realnonneg$. Then, the attack is covert and \eqref{eq:stealthy} holds if the attacker alters the measurement \eqref{eq:measurments_a} by
\begin{align}\label{eq:covert_us}
\us(t)=
    {\mb{C}}
	{\int^{t}_{t_{\rm a}}
	e^{{\mb{A}}_1(t-{\tau})}
	\mb{B}_{\! \Ac}\ua({\tau})d{{\tau}}}.
\end{align}
We refer to \cite{9683427} for details of the derivation and proof.

\noindent
\emph{{Cooperative DoS and replay attack.}} It is shown that a denial-of-service (DoS), interfering in a UAVs' communication, causes unstable and unsafe flights \cite{chen2019container}. We formulate a scenario where replay attacks\footnote{A replay attack is the case that the attacker replays (periodically) a sequence of stored data as real-time measurements to conceal any deviation from a normal operation.} \cite{mo2009secure} are implemented in cooperation with a DoS in order to keep the DoS stealthy in the networked-level measurements \eqref{eq:measurments_a}. Here, the cooperatively-stealthy DoS and replay attack take place when the UAVs have reached the formation configuration in \eqref{eq:formation_consensus} and thus are hovering only, giving the attacker the opportunity to record and store slow-varying measurements \eqref{eq:measurments_a} for a time interval $T_r \in \realpos$ before starting the attacks $\ua $ and $ \us$ respectively in \eqref{eq:cl_sys} and  \eqref{eq:measurments_a}  that is $\ua(t) = \boldsymbol{0} $ and  $\us(t) = \boldsymbol{0}, \;  \forall \, t \in [t_0\;  t_{\rm a}) $ where $t_{\rm a} > T_r$.
Then, upon starting a DoS at a time instant $ t_{\rm a} \in \realpos$, causing one or more UAVs lose their inter-communication and deviating from the equilibrium states \eqref{eq:formation_consensus}, a concurrent replay attack $\us$ in $\eqref{eq:measurments_a}$ given by
\begin{align}\label{eq:repay_us}
    \us(t)=\mb{C}\mb{x}(t)-\mb{y}(t-n T_r),\ n \in \naturals, \ \ t \geq t_{\rm a},
\end{align}
causes the stealthiness condition in \eqref{eq:stealthy} holds.

We note that \emph{a priori} system knowledge is not required for the cooperative DoS and replay attack that is $\hat{\Sigma}(\hat{\mb{A}}_{\sigt}, \hat{\mb{C}}, \sigma(t)) = \emptyset $.

\subsection{Observer-based detection framework}\label{Sec:detection_framework}
The susceptibility of observers/monitors to the sophisticated attacks satisfying the stealthiness condition in \eqref{eq:stealthy} has been demonstrated in both deterministic and stochastic settings in \cite{mao2020novel,smith2015covert,mo2009secure}.
Motivated by this challenge, we present centralized and decentralized detection schemes to address the attack detection problem formulated as in \eqref{eq:hypothesis_testing}.

\noindent
\subsubsection{Centralized detection scheme} 
In the centralized detection scheme, we leverage switching links in the inter-UAVs' communication topology to generate model discrepancy rendering the stealthy attacks detectable in the measurements \eqref{eq:measurments} monitored in a ground control center. Note that the UAVs' communication may be subject to switching connections in two senses. First, a communication link failure induced due to operation in uncertain environments, and second, a planned switch (addition or removal of connections) triggered for security and performance reasons. Regardless of the underlying causes of switching links in the inter-UAVs' communication, we investigate their effect on the detection of stealthy attacks.

Consider the dynamical system in \eqref{eq:cl_sys_scalar}, a centralized attack detection monitor (central monitor), derived based on the initial (normal) communication mode of UAVs ($\sigt=1$), is given by

\vspace{1ex}
\noindent
\small
\begin{empheq}[left=\Sigma_{\s \Oc}^{\s \Mc} \!:\!\empheqlbrace]{align}\label{eq:obs_cent}
\dot{\hat{{\rm{x}}}} 
&=
{A}_{\sigt} \hat{\rm x} +{{B}}^{\boldsymbol{F}}_{\sigma(t)} \rxtar + {{H}}_{}({\rm{y}}_{\s} - \hat{\rm y}_{} ),
&  \sigt=1,
\nonumber \\ 
\hat{\rm{y}}&={C}_{\s }\hat{\rm x}, & \hat{\rm{x}}(0)=\boldsymbol{0},
 \\ \nonumber
{\rm{r}}_{ 0} &= {\rm{y}} -\hat{\rm{y}}, & \hspace{-6ex} \text{central residual,} 
\end{empheq}
\normalsize
where $ H $ is an observer gain such that $(A_{\sigt}-HC)$ is stable in all modes and $\lim_{t \rightarrow \infty} {\rm{r}}_{ 0} = \boldsymbol{0}$ in the absence of attacks \cite{9683427}. Also, we let ${\rm r}_{0}^j(t)  = {C}^{j} ({\rm{x}} - \hat{\rm{x}})$ denote the $j$-th component of the residual ${\rm{r}}_{ 0}$ 
with ${C}^{j}$ being the $j$-th row vector of matrix ${C}$ in \eqref{eq:measurments_b}. Then, in the absence of attacks, an upper bound on the residuals is obtained as follows:
\begin{align}\label{eq:cent_threshold}
   | {\rm{r}}_{ 0}^j(t) | \leq  {\bar{k}_j} e^{-\bar\lambda_j t} \bar{\omega} + \epsilon_0 =: \epsilon^j_0,
\end{align}
where ${\bar{k}_j}$ and $\bar\lambda_j$ are positive constants such that $|{C}^{j}e^{({A}_1-{H}_{}{C})t}| \! \leq \! {\bar{k}_j} e^{-\bar\lambda_j t} $, $\bar{\omega}$ is an upper bound such that $|\rm{e}(0)| \! = \! |\rm{x}(0)-\hat{\rm{x}}(0)| \!=\! |\rm{x}(0)| \leq \bar{\omega}$, and  $\epsilon_0 \!\in \! \realpos$ is a sufficiently small constant to account for measurement noises.
 
Given the central monitor \eqref{eq:obs_cent} and its corresponding thresholds in \eqref{eq:cent_threshold}, the hypothesis testing problem in \eqref{eq:hypothesis_testing} can be quantified either by 
\begin{subequations}\label{eq:hypotheses_global}
\begin{align}\label{eq:hypotheses_global_null}
    \nullH &: {\rm attack}\text{-}{\rm free}, 
    & \text{if} & & | {\rm{r}}_{ 0}^j(t) | &\leq  \epsilon^j_0, \ \ \forall \, j \in \Mc,
    \\ \label{eq:hypotheses_global_alternative}
    \alterH &: {\rm attacked}, 
    & \text{if} & & | {\rm{r}}_{ 0}^j(t) | &>  \epsilon^j_0, \ \ \exists \, j \in \Mc,
\end{align}
\end{subequations}
or by
\vspace{-1em}
\begin{align}
    {\rm{r}}_{ 0}^{\top}\boldsymbol{\Sigma}^{-1}_{{\rm{r}}_{\s 0}}{\rm{r}}_{ 0}
    \overset{\nullH}{\underset{\alterH}{\lesseqgtr}} 
    \texttt{threshold},
    \vspace{-1ex}
\end{align}
with $\boldsymbol{\Sigma}_{{\rm{r}}_{\s 0}} $ being the covariance of the residual ${\rm{r}}_{ 0}$ having a zero-mean Gaussian distribution in stochastic settings where a discretized version of \eqref{eq:obs_cent} as a Kalman filter is used, together with $\chi^2$ (chi-squared) tests, for attack detection \cite{mo2009secure}.

As shown in \cite{mo2009secure,9683427}, $\chi^2$ detectors, Kalman filters, and Luenberger-type observers/monitors fail in detecting the stealthy attacks that were presented in Section \ref{Sec:stealthy_attacks_realization} provided the stealthiness condition \eqref{eq:stealthy} holds, causing a false validation of the null hypothesis \eqref{eq:hypotheses_global_null}. 
Here, based on the results in \cite[Th. III.3]{9683427}, we evaluate the effect of switching connections in inter-UAVs’
communication on the violation of \eqref{eq:stealthy} and thus on the validation of the null hypothesis \eqref{eq:hypotheses_global_null}. This procedure will be presented in Algorithm \ref{alg:detection_alg_global} in Section \ref{Sec:Results}.

\noindent
\subsubsection{Decentralized detection scheme} In the decentralized detection scheme, a set of UAVs, equipped with on-board (local) monitors, leverage the information exchange with their neighbouring UAVs to locally detect the stealthy attacks on their neighbours. Upon attack detection, a local monitor triggers an inter-UAV communication switch and informs other local monitors as part of a contingency plan (see Algorithm \ref{alg:detection_alg_global}).

Note that in the networked UAVs with a connected communication graph $\Gc_{\sigt}$, any UAV has access to the states of itself as well as the position states of the set of its immediate neighbors $\Nc^{i}_{\sigt}$ (cf. control protocol \eqref{eq:ctrl_proto_normal}). Accordingly, we define, for the $i$-th UAV in the network, a set of local measurements, indexed by set  $\Mc^{i} $, as follows:
\begin{subequations}\label{eq:measurments_local}
\begin{align}\label{eq:measurments_local_a}
    \Mc^{i} &= \Nc^{i}_{\sigt} \cup \{i\}, & \hspace{-2em} \sigt &=1 \in \Qc,
    \\ \label{eq:measurments_local_b}
    {\mb{y}}_{i} &= \mb{C}_{i}{\mb x},\ \,  \text{and} \ \, {\rm{y}}_{i} = {C}_{i}{\rm x},
    & \mb{C}_{i} &= C_i \otimes I_2,
    \\ \label{eq:measurments_local_b}
    C_i &= \diag\paren{{\rm C}_{{\rm p},i}, \efrak_{i}^{\s \top}},
    &
    {\rm C}_{{\rm p},i} &= \col \big(\efrak^{\s \top}_{j} \big)_{j \in \Mc^{i}}. 
\end{align}
\end{subequations}
where $\mb{x}$ and ${\rm x}$ are the system states in \eqref{eq:cl_sys} and \eqref{eq:cl_sys_scalar}, respectively. Different from the networked measurements in \eqref{eq:measurments}, the local measurements in \eqref{eq:measurments_local} are not transmitted through compromised network channels to the control center for monitoring. Instead, they are locally available for each UAV, and thus are not subject to alterations by sensory attacks. 

Given the local measurements \eqref{eq:measurments_local} and dynamics \eqref{eq:cl_sys_scalar}, we define the local attack detector $\Sigma^{\, i}_{\s \Oc}$ for the $i$-th UAV as follows:
\begin{equation}\label{eq:obs_decent}
 \Sigma^{\, i}_{\s \Oc}\!:\!
 \left\{
 \begin{array}{ll}
\hspace{-1ex} \dot{\hat{{\rm{x}}}}_i  =
  {A}_{\sigt} \hat{\rm x}_i +{{B}}^{\boldsymbol{F}}_{\sigma(t)} \rxtar + {{H}}^i_{}({\rm{y}}_{i} - \hat{\rm y}_{i} ),
& \sigt \in \Qc,
 \\ 
\hspace{-1ex} \hat{\rm{y}}_i={C}_{i}\hat{\rm x},  &\hat{\rm{x}}(0)=\boldsymbol{0},
 \\
\hspace{-1ex}\, {\rm{r}}_{ i} = {\rm{y}}_i -\hat{\rm{y}}_i, &    \hspace{-4ex} \text{local residual}, 
\end{array}
\right.  
\end{equation}
where $\hat{\rm x}_i$ is the local estimation of ${\rm x}$ in \eqref{eq:cl_sys_scalar}, and $ H^i $ is an observer gain such that $(A_{\sigt}-H^iC_i)$ is stable in all modes. Therefore, in the absence of attacks, $\lim_{t \rightarrow \infty} {\rm{r}}_{ i} = \boldsymbol{0}$, and similar to the central monitor's, the $j$-th component of local residuals, ${\rm{r}}^j_{ i}$'s, hold an upper bound (threshold) as follows:
\begin{align}\label{eq:local_threshold}
   | {\rm{r}}_{i}^j(t) | \leq  {\bar{k}_{i,j}} e^{-\bar\lambda_{i,j} t} \bar{\omega} + \epsilon_i =: \epsilon^j_i,
\end{align}
where ${\bar{k}_{i,j}}$ and $\bar\lambda_{i,j}$ are positive constants such that $|{C}^{j}_{i}e^{({A}_1-{H}^{i}{C^j_i})t}|\leq {\bar{k}_{i,j}} e^{-\bar\lambda_{i,j} t} $, $\bar{\omega}$ is an upper bound such that $|{\rm{e}}_i(0)|= |\rm{x}(0)-\hat{{\rm x}}_i(0)| = |\rm{x}(0)| \leq \bar{\omega}$, and  $\epsilon_i \in \realpos$ is a sufficiently small constant to account for measurement noises. 

Now, the hypothesis testing problem in \eqref{eq:hypothesis_testing} can be revisited and quantified using local residuals as follows: 
\begin{subequations}\label{eq:hypotheses_local}
\begin{align}\label{eq:hypotheses_local_null}
    \nullH &\!:\! {\rm attack}\text{-}{\rm free}, 
    \hspace{-1em}\;
    &  \text{if} \; & | {\rm{r}}_{i}^j(t) | \leq  \epsilon^j_i, \; \forall \, j \in \Mc^i, \, \forall\,i \in \Dc,
    \\\label{eq:hypotheses_local_alternative}
    \alterH &\!:\! {\rm attacked}, 
    & \text{if} \; & | {\rm{r}}_{i}^j(t) | >  \epsilon^j_i, \; \exists \, j \in \Mc^i, \, \exists \, i \in \Dc,
\end{align}
\end{subequations}
where $\Dc$ is the set of all the UAVs equipped with a local detector as in \eqref{eq:obs_decent}.

Note that a successful attack detection using the hypothesis testing \eqref{eq:hypotheses_local} does depend on the sensitivity of the local residuals, ${\rm r}_i$'s, to the stealthy attacks. In this regard, the following results characterize the capability of local detectors in detecting stealthy attacks.
\begin{proposition}\label{prop:detectability_local}
Consider dynamics \eqref{eq:cl_sys_scalar} and let the $i$-th UAV be equipped with the local attack detector $\Sigma^{\, i}_{\s \Oc}$ in \eqref{eq:obs_decent} and local measurements \eqref{eq:measurments_local}. Then, stealthy ZDA and covert attacks are detectable in $\Sigma^{\, i}_{\s \Oc}$'s residual ${\rm r}_i$ if the set of compromised UAVs satisfies $\Ac \subseteq \Nc_{\sigt}^i,\ \sigt=1 \in \Qc$.
\end{proposition}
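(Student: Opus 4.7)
The plan is to analyze the local error ${\rm e}_i \coloneqq {\rm x} - \hat{\rm x}_i$ of the observer in \eqref{eq:obs_decent} and show that its residual ${\rm r}_i = C_i {\rm e}_i$ eventually exceeds the threshold \eqref{eq:local_threshold} under either attack whenever $\Ac \subseteq \Nc^i_{\sigt}$, $\sigt = 1$. Subtracting \eqref{eq:obs_decent} from \eqref{eq:cl_sys_scalar_dyn} yields
\[
    \dot{\rm e}_i = \bigl(A_{\sigt} - H^i C_i\bigr)\,{\rm e}_i + {B}_{\!\Ac}\,\uar,
\]
and the key structural observation is that the local output ${\rm y}_i = C_i{\rm x}$ is assembled only from UAV $i$'s own state and position data received directly from its neighbours over the cooperative-control channel. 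Hence the sensor-attack signal $\usr$ appearing in \eqref{eq:cl_sys_scalar_measurements} is absent from both ${\rm y}_i$ and this error equation.

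For the covert attack, the masking signal \eqref{eq:covert_us} is crafted against the global output matrix $\mb{C}$ and therefore has no effect on ${\rm r}_i$ by the preceding observation. Since $H^i$ makes $A_1 - H^i C_i$ Hurwitz, ${\rm r}_i$ reduces after a vanishing transient to the forced response of the stable linear system with transfer function $T_i(s) = C_i(sI - A_1 + H^i C_i)^{-1} {B}_{\!\Ac}$. A direct Markov-parameter computation gives $C_i B_{\!\Ac} = 0$ (since $i \notin \Ac$) and the first nontrivial term $C_i A_1 B_{\!\Ac}$ whose upper block $C_{{\rm p},i}\,{\rm B}_{\!\Ac}$ consists of canonical vectors $\{C_{{\rm p},i}\,\efrak_{i_k}\}_{k=1}^{|\Ac|}$ with $i_k \in \Ac \subseteq \Mc^i$, and hence has full column rank. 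Thus $T_i(s) \not\equiv 0$, so the corresponding triple is left-invertible, and any nontrivial covert actuator signal $\ua \neq 0$ drives ${\rm r}_i$ past \eqref{eq:local_threshold} in finite time.

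For ZDA, the attack signal is $\uar(0)\,e^{\lambda_o t}$ with $\lambda_o > 0$ an invariant zero of the global triple $(A_1, {B}_{\!\Ac}, C)$. Since output injection preserves invariant zeros, local detection reduces to showing that the pair $(x_0, \uar(0))$ associated with $\lambda_o$ does not lie in the null space of the Rosenbrock pencil
\[
    P_i(\lambda_o) = \begin{bmatrix} \lambda_o I - A_1 & -{B}_{\!\Ac} \\ C_i & 0 \end{bmatrix}.
\]
Writing $x_0 = [p_0;\,v_0]$ in the putative null vector, the two block rows of $(\lambda_o I - A_1) x_0 = B_{\!\Ac}\uar(0)$ give $v_0 = \lambda_o p_0$ and $M p_0 = {\rm B}_{\!\Ac}\uar(0)$, where $M \coloneqq (\lambda_o^2 + \gamma\lambda_o)\,I + \alpha \lap_1$, while $C_i x_0 = 0$ forces $p_{0,j} = 0$ for every $j \in \Mc^i$. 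Because $\lambda_o, \gamma > 0$ and $\lap_1$ is symmetric positive semidefinite, $M$ is symmetric positive definite, so $p_0 = M^{-1}{\rm B}_{\!\Ac}\uar(0)$ and restricting to rows $\Ac \subseteq \Mc^i$ gives $0 = (p_0)_{\Ac} = (M^{-1})_{\Ac\Ac}\,\uar(0)$. The principal submatrix $(M^{-1})_{\Ac\Ac}$ of the positive-definite matrix $M^{-1}$ is itself positive definite and therefore invertible, forcing $\uar(0) = 0$ and contradicting nontriviality of the ZDA. Hence $C_i x_0 \neq 0$ and the attack is exposed in ${\rm r}_i$.

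The main obstacle I expect is extending the positive-definiteness step to ZDA frequencies with nonzero imaginary part, since a real-valued ZDA with oscillatory unstable modes corresponds to complex-conjugate pairs $\lambda_o$ for which $M$ is no longer Hermitian. I would dispatch this by showing that $M$ and its $\Ac$-indexed principal submatrix remain nonsingular whenever $\operatorname{Re}(\lambda_o^2 + \gamma\lambda_o) > 0$ via the Hermitian part $(M + M^*)/2$, recovering the same principal-submatrix argument. The remaining pieces—left-invertibility for the covert case and zero-preservation under output injection—are standard once the Markov-parameter rank and pencil structure above are in place.
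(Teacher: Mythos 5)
Since this is the short (conference) version, the paper's own proof of this proposition is omitted, so a line-by-line comparison is not possible; judged on its own, your argument is essentially sound and follows the natural route one would expect in the omitted appendix: the local output \eqref{eq:measurments_local} is formed from locally available data, so $\usr$ never enters the error dynamics of \eqref{eq:obs_decent}, which disposes of the covert masking term \eqref{eq:covert_us}; the condition $\Ac \subseteq \Nc^1_{\sigt}$ places the attacked positions inside $\Mc^i$, which is exactly what makes both the Markov-parameter computation and the pencil computation work; and the identity $M=(\lambda_o^2+\gamma\lambda_o)I+\alpha\lap_1$ with $\lambda_o,\gamma,\alpha>0$ and $\lap_1\succeq 0$, together with positive definiteness of principal submatrices of $M^{-1}$, correctly forces $\uar(0)=0$, ruling out a local zero direction at $\lambda_o$. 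Two points should be tightened. First, the inference ``$T_i(s)\not\equiv 0$, so the triple is left-invertible'' is not valid as stated (a nonzero transfer matrix can still annihilate nonzero inputs); what you actually need, and in fact already computed, is that the first nonvanishing Markov parameter $C_iA_1B_{\!\Ac}$ (which equals $C_i(A_1-H^iC_i)B_{\!\Ac}$ because $C_iB_{\!\Ac}=0$) has full column rank $|\Ac|$, which does give left-invertibility and hence injectivity of the map $\uar\mapsto{\rm r}_i$ from zero initial error; you should also state explicitly that this left-invertibility is what lets you pass from ``$\lambda_o$ is not a local invariant zero with nonzero input direction'' to ``the exponential ZDA input cannot zero the local residual from any initial error.'' Second, ``drives ${\rm r}_i$ past \eqref{eq:local_threshold} in finite time'' is justified for the ZDA, whose effect on the residual grows like $e^{\lambda_o t}$, but for an arbitrary covert input the correct claim is only that ${\rm r}_i\not\equiv 0$ (i.e., the residual is sensitive to the attack); crossing the noise margin $\epsilon_i>0$ in \eqref{eq:local_threshold} additionally requires the attack to be of sufficient magnitude. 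Finally, the complex-$\lambda_o$ concern you raise is moot here, since the ZDA model in \eqref{eq:ZDA_signal} and the pencil condition \eqref{eq:zda_pencil} are posed for $\lambda_o\in\realpos$.
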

\begin{proof}
\ifshort Omitted due to space limitation. \else
See Appendix \ref{app_detectability_local}. \fi
\end{proof}

Note that the $i$-th UAV's local monitor, $\Sigma^{\, i}_{\s \Oc},\ i \in \Dc $ secures the networked UAVs against the stealthy attacks on its neighbors' set ${\Nc}^i_{\sigt}$. Therefore, the problem of interest is to determine a set $\Dc \subseteq \Vc $ of local detectors $\Sigma^{\, i}_{\s \Oc}\text{'s},\, i \in \Dc$ such that they cover the entire set $\Vc$ of UAVs.

\begin{proposition}\label{prop:detectability_local_total}
Consider the networked UAVs with the dynamics in  \eqref{eq:cl_sys_scalar} subject to stealthy attacks on a set of compromised UAVs $\Ac \subseteq \Vc$ and let the set
\begin{align}\label{eq:local_obs_set}
   \Dc \coloneqq \{i \in \Vc \mid \bigcup_{i \in \Dc} {\Nc}^i_{\sigt} =\Vc ,\ \sigt= 1 \in \Qc \}, 
\end{align}
represent the set of UAVs equipped with local attack detectors $\Sigma^{\, i}_{\s \Oc}$'s in \eqref{eq:obs_decent}. Then, stealthy ZDA and covert attacks undetectable in $\Sigma^{\, i}_{\s \Oc}$'s residual ${\rm r}_i,\ \forall \, i \in \Dc$, is impossible, securing the entire network set $\Vc$ of UAVs against stealthy attacks.
\end{proposition}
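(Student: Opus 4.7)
The plan is to argue by contradiction, combining Proposition~\ref{prop:detectability_local} with the coverage property \eqref{eq:local_obs_set}. Suppose a non-trivial stealthy attack (ZDA or covert) targeting a set $\Ac \subseteq \Vc$ evades every local monitor $\Sigma^{\,i}_{\s \Oc}$, $i \in \Dc$, so that $|{\rm r}_i^j(t)| \leq \epsilon_i^j$ for all $j \in \Mc^i$, $i \in \Dc$, and every $t$. Since each $\Sigma^{\,i}_{\s \Oc}$ in \eqref{eq:obs_decent} is an asymptotically stable Luenberger observer of the pre-attack dynamics, a sustained zero residual forces the attack's contribution to the local output $C_i {\rm x}$ to vanish. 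For covert attacks this reduces to a genuine local unobservability condition, because the local measurements \eqref{eq:measurments_local} are drawn from on-board sensors and therefore cannot be masked by the sensor-channel injection $\usr$ that the attacker uses to remain covert at the centralized monitor; for ZDA it is the local specialization of Proposition~\ref{prop:detectability_local}.

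Next I would stack the per-detector conditions into a single global one. Let $C_{\Dc} \coloneqq \col(C_i)_{i \in \Dc}$. By \eqref{eq:local_obs_set}, the position block of $C_{\Dc}$ contains $\efrak_k^{\,\top}$ for every $k \in \Vc$, so $C_{\Dc}$ effectively observes the entire position vector of the network. Consequently, the combined local stealthiness forces the attack-induced position component of the state to vanish identically in $t$. Differentiating through the integrator $\dot{\pr} = \vr$ in \eqref{eq:ol_sys} then forces the attack-induced velocity component to vanish as well, and substituting back into the velocity equation in \eqref{eq:cl_sys_scalar_dyn} leaves only $B_{\!\Ac} \uar(t) \equiv \boldsymbol{0}$. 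Since $B_{\!\Ac}$, built from the canonical basis vectors in \eqref{eq:sys_matrices__attack_sig}, has full column rank, this yields $\uar \equiv \boldsymbol{0}$, contradicting the assumed non-triviality of the attack.

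I expect the main obstacle to lie in cleanly handling the covert-attack case. Its centralized stealthiness is manufactured by actively cancelling the attack footprint via $\usr$ in \eqref{eq:covert_us}, and at the decentralized monitors this escape route is unavailable; the covert attacker is therefore forced to pick a $\ua$ that is itself unobservable through each $C_i$, a much stronger constraint than what a pure ZDA must satisfy. Making this asymmetry precise, and verifying that the chain of implications --- $C_{\Dc}$ observes $\pr$, hence the position deviation vanishes, hence the velocity deviation vanishes, hence $\ua \equiv \boldsymbol{0}$ --- holds uniformly for both attack classes and is not obscured by the transient induced by the observer's zero initial condition, is the delicate step; the remainder reduces to standard Hurwitz-observer convergence together with the coverage condition.
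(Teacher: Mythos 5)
Your overall skeleton --- coverage \eqref{eq:local_obs_set} lets you stack the local output maps so that every position in the network is observed, and then the cascade ``position deviation $\equiv 0$ $\Rightarrow$ velocity deviation $\equiv 0$ $\Rightarrow$ ${\rm B}_{\!\Ac}\uar \equiv 0$ $\Rightarrow$ $\uar \equiv 0$'' (using that ${\rm B}_{\!\Ac}$ has full column rank) --- is sound, and the covert case is in good shape: since the masking signal \eqref{eq:covert_us} never enters the on-board measurements \eqref{eq:measurments_local}, the attack-induced deviation is the zero-initial-condition forced response, which is one and the same trajectory for every detector, so stacking is legitimate there. The genuine gap is in the ZDA case. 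Under the paper's stealthiness notion \eqref{eq:stealthy}, undetectability at detector $i$ only means that the local output coincides with that of \emph{some} attack-free trajectory with a possibly different initial state; equivalently, for each $i\in\Dc$ there exists a deviation $\delta^i$ with $\dot{\delta}^i=A_1\delta^i+B_{\!\Ac}\uar$ and $C_i\delta^i\equiv 0$, whose initial condition may depend on $i$ (this is precisely how a ZDA hides: its residual footprint mimics the initial-condition transient already tolerated by the threshold \eqref{eq:local_threshold}). Your stacking step silently assumes that all detectors are fooled by one common deviation trajectory, which does not follow from the per-detector conditions; and your fallback, ``for ZDA it is the local specialization of Proposition~\ref{prop:detectability_local}'', is not available either, because that proposition requires $\Ac\subseteq\Nc^i_{\sigt}$ for a single $i$, which the coverage condition \eqref{eq:local_obs_set} does not ensure.

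The gap is fixable, but it needs an explicit spectral argument that you do not give. For the ZDA \eqref{eq:ZDA_signal}, $\uar$ is proportional to $e^{\lambda_o t}$ with $\lambda_o\in\realpos$, while the closed-loop matrix $A_1$ of a connected graph has its spectrum in the closed left half-plane; hence $\lambda_o I-A_1$ is invertible, the output-zeroing (particular) part of every $\delta^i$ is the same $e^{\lambda_o t}\delta_0$ with $\delta_0=(\lambda_o I-A_1)^{-1}B_{\!\Ac}{\rm u}_{\rm a}(0)$, and since $e^{\lambda_o t}$ is linearly independent of the modes of $e^{A_1 t}$, each local stealthiness condition forces $C_i\delta_0=0$ separately from the detector-dependent, attack-independent homogeneous part. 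Only now does coverage give that all position entries of $\delta_0$ vanish, whence the velocity entries vanish, ${\rm B}_{\!\Ac}{\rm u}_{\rm a}(0)=0$, and ${\rm u}_{\rm a}(0)=0$, the desired contradiction. You should also state explicitly that you argue with the idealized condition \eqref{eq:stealthy} (exact output match) rather than ``residual below threshold'', since the thresholds \eqref{eq:local_threshold} are strictly positive and a below-threshold residual alone does not force the attack's output contribution to vanish. (The paper relegates its own proof to an appendix omitted from this version, so I can only judge your route on its merits; it is a reasonable, essentially self-contained route once the above repair is added.)
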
 

\begin{proof}
\ifshort Omitted due to space limitation. \else
See Appendix \ref{app_detectability_local_total}. \fi
\end{proof}
\begin{figure}[t]
  \subfloat[Formation References\label{fig:formation_graph}]{\small
    \includegraphics[width=.35\linewidth]{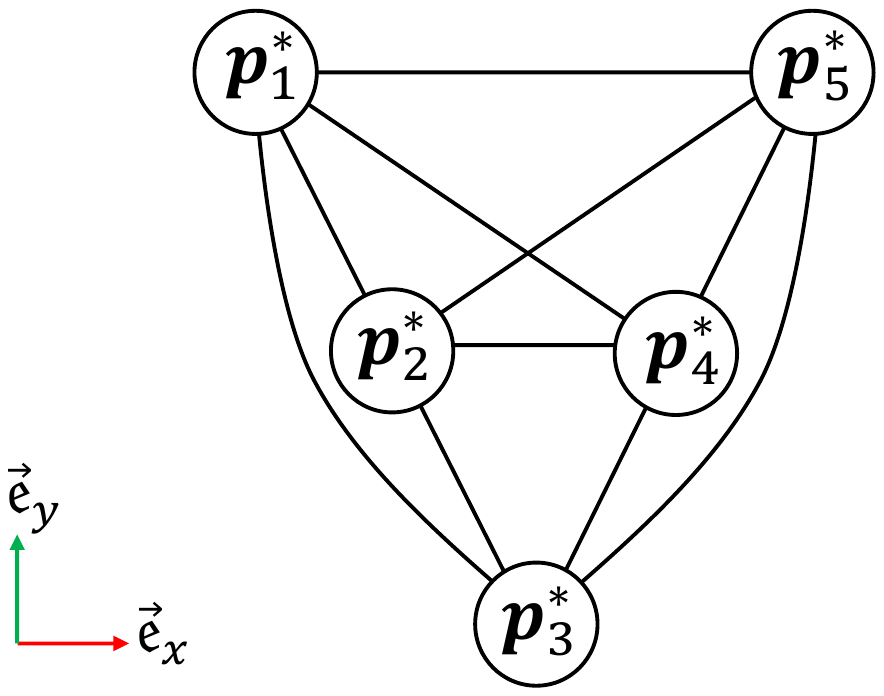}
  }
 \hfill
 \subfloat[V-shape Formation\label{fig:V_shapeformation_graph}]{
    \includegraphics[width=.54\linewidth]{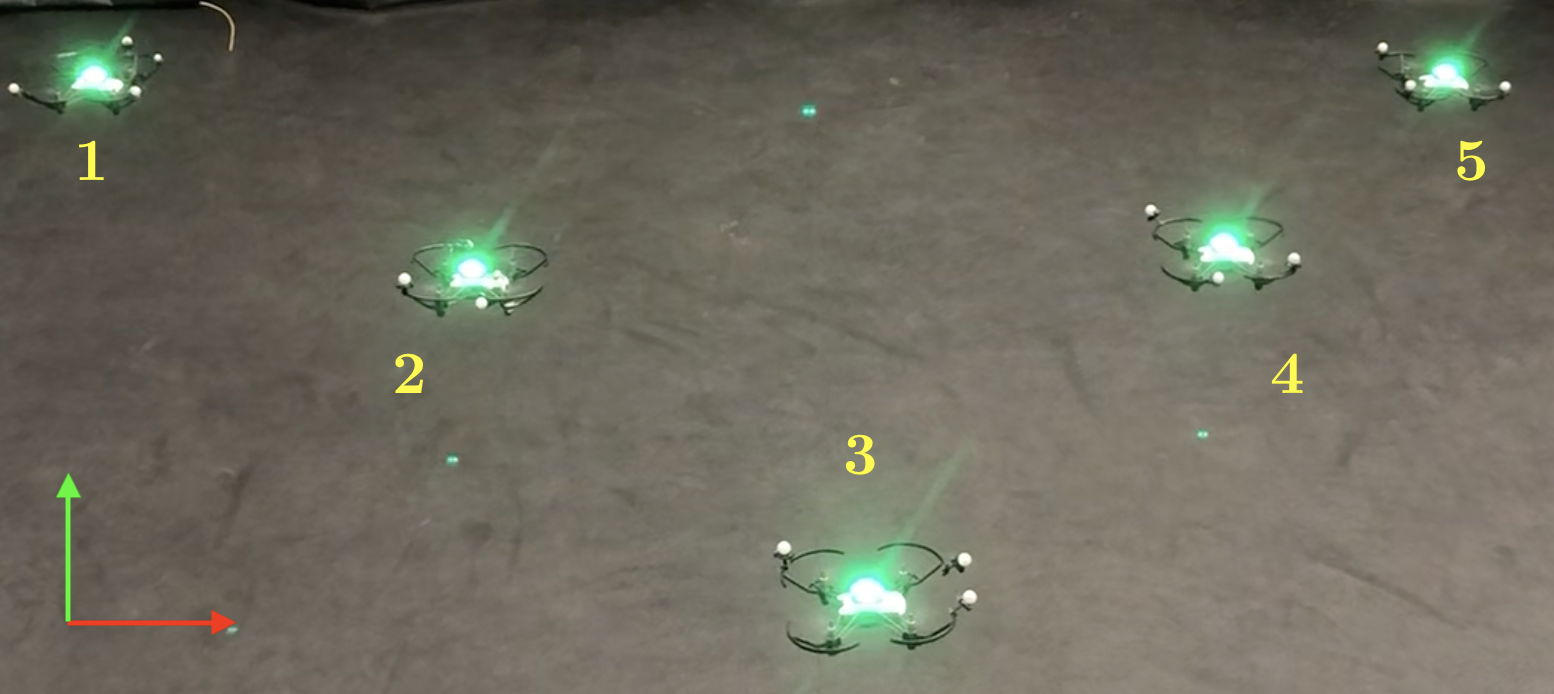}
  }
\\
  \subfloat[mode 1\label{fig:graph1}]{\small
    \includegraphics[width=.23\linewidth]{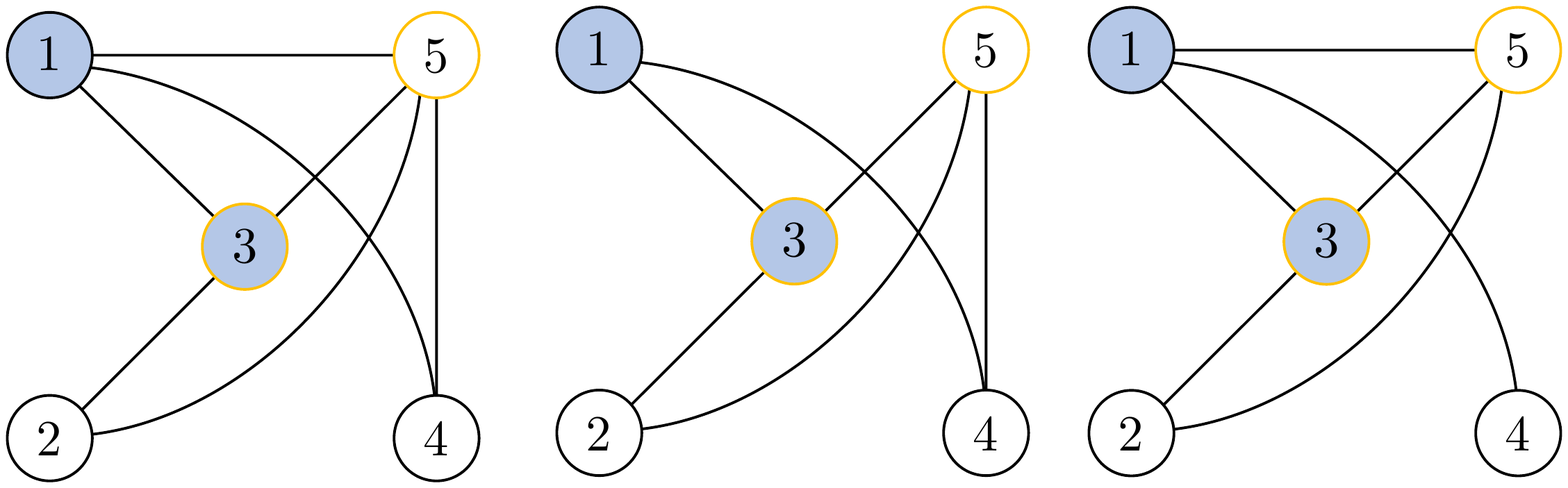}
  }
  \subfloat[mode 2\label{fig:graph2}]{\small
    \includegraphics[width=0.23\linewidth]{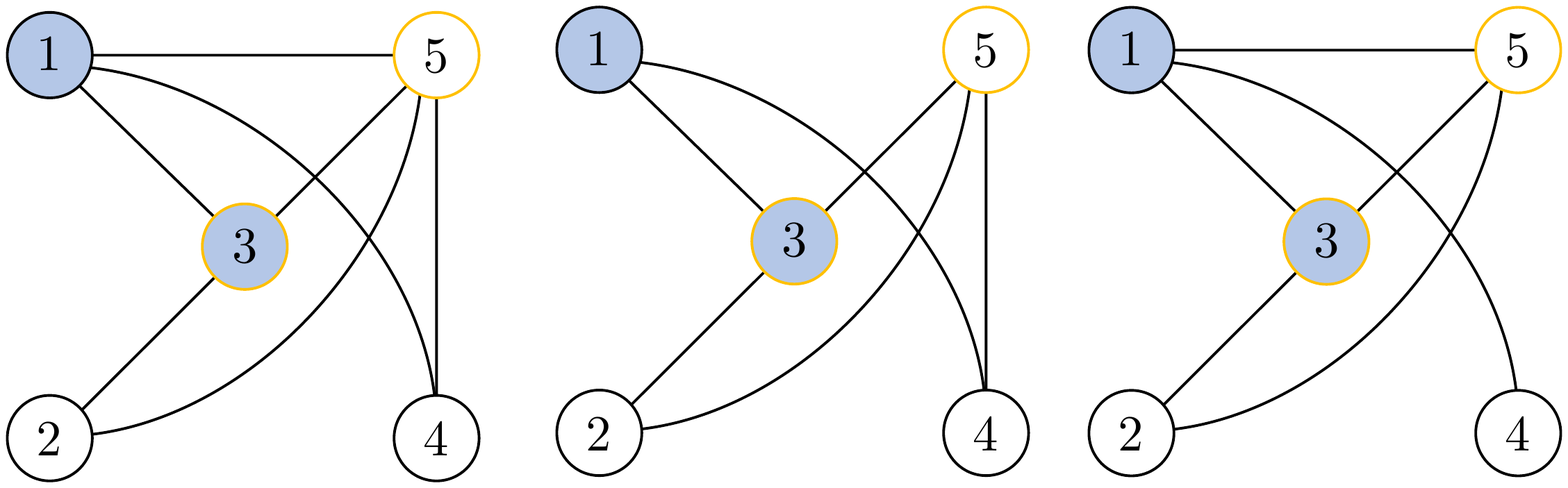}
  }
  \subfloat[mode 3\label{fig:graph3}]{\small
    \includegraphics[width=0.23\linewidth]{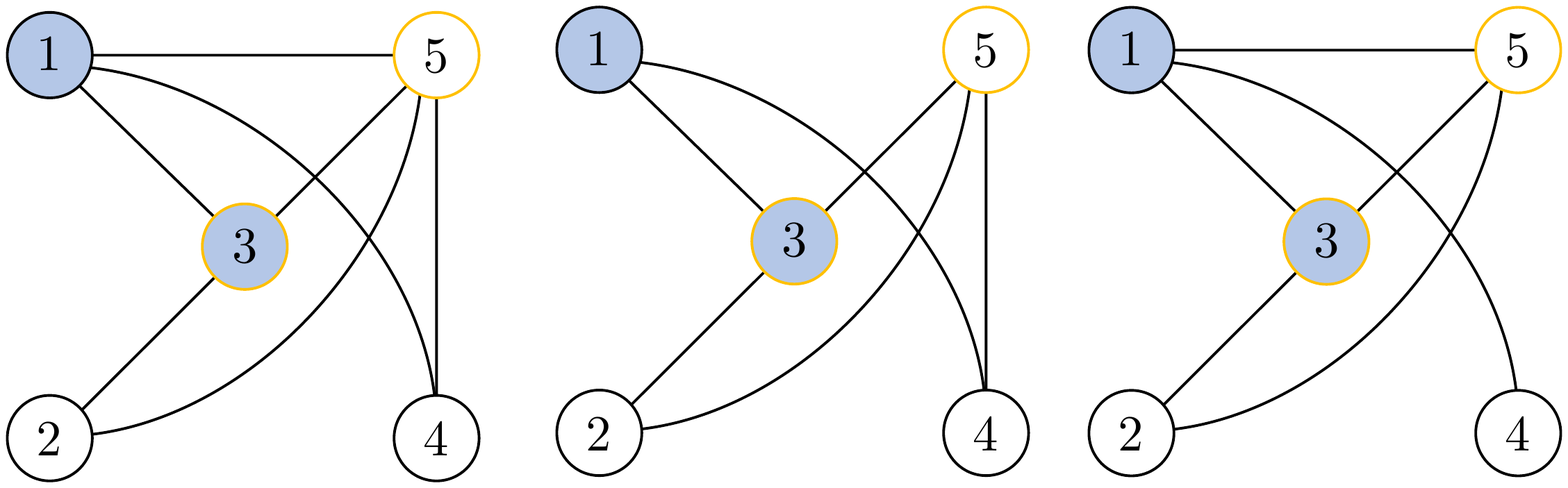}
  }
  \subfloat[mode 4\label{fig:graph4}]{\small
  \includegraphics[width=0.23\linewidth]{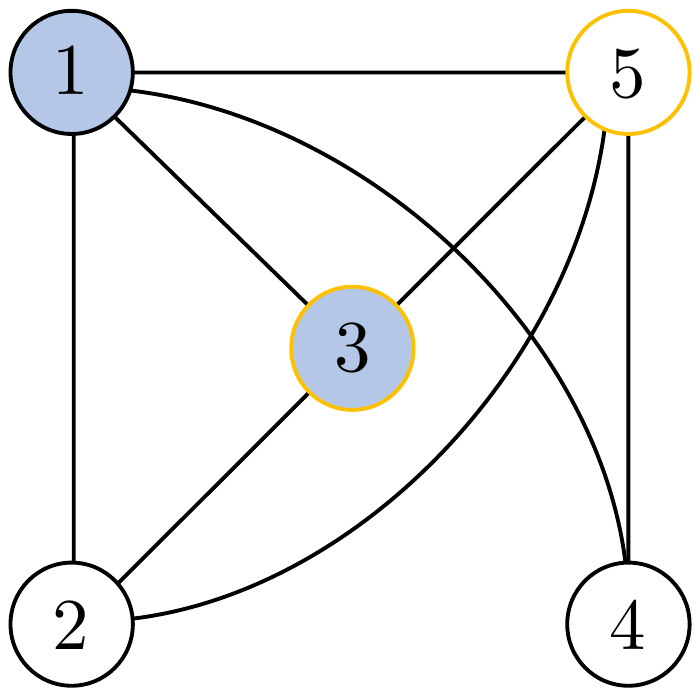}}
  %
  \caption{ Multi-UAV's formation and communication topology. (a) Formation references specifying a V-shape in the $x\!-\!y$ plane. (b) V-shape formation of UAVs. (c)-(f) Inter-UAV's communication graph $\Gc_{\sigt}$ with four modes $ \sigt = \{1,2,3,4\}=: \Qc $. UAVs initially communicate in mode $\sigt=1$ and may switch to other modes $\sigt=\{2,3,4\}$ if activated by a local detector. Blue nodes indicate the UAVs equipped with a local monitor and orange nodes specify the UAVs monitored by the ground control center.}
  \label{fig:graphs}
\vspace{-1em}
\end{figure}
\vspace*{-1em}
\begin{algorithm}[H]
\small
\renewcommand{\algorithmicrequire}{\textbf{Input:}}
\renewcommand{\algorithmicensure}{\textbf{Require:}}
\def\NoNumber#1{{\def\alglinenumber##1{}\State#1}\addtocounter{ALG@line}{-1}}
\caption{\small Attack detection by the $i$-th local monitor, $i \in \Dc$}\label{alg:detection_alg_local}
\begin{algorithmic}[1]
\Require $\Sigma^{\, i}_{\s \Oc},\ i \in \Dc $ in \eqref{eq:obs_decent} and \eqref{eq:local_obs_set}, $\mb{y}_{i}$ in \eqref{eq:measurments_local}, $ \epsilon^j_i$ in  \eqref{eq:local_threshold}
\Procedure{local hypothesis testing \eqref{eq:hypotheses_local}}{}
  	\While {$\nullH$ in \eqref{eq:hypotheses_local_null}} 
  	\State Compute local residual ${\rm r}_i$ as in \eqref{eq:obs_decent} 
  	\State Compute corresponding thresholds $\epsilon^j_i$ as in \eqref{eq:local_threshold}
    \If{$ |{\rm r}^j_i|  > \epsilon^j_i $}
        \State Reject the null hypothesis $\nullH$ in \eqref{eq:hypotheses_local_null} \Comment{{\color{gray}Stealthy}}
        \NoNumber{ {\color{gray} attack is locally detected.}}
        \State cooperate with other local detectors in $\Dc$ to
        \NoNumber{run a contingency plan for the entire network.}
    \EndIf
    \EndWhile
\EndProcedure
\end{algorithmic}
\normalsize 
\end{algorithm}
\vspace*{-1em}
\begin{algorithm}[H]
\small
\caption{\small Topology switching for centralized attack detection}\label{alg:detection_alg_global}
\renewcommand{\algorithmicrequire}{\textbf{Inputs:}}
\renewcommand{\algorithmicensure}{\textbf{Require:}}
\def\NoNumber#1{{\def\alglinenumber##1{}\State#1}\addtocounter{ALG@line}{-1}}
\begin{algorithmic}[1]
\Require local observer: $\Sigma^{\, i}_{\s \Oc},\ i \in \Dc $ in \eqref{eq:obs_decent} and \eqref{eq:local_obs_set}, $\mb{y}_{i}$ in \eqref{eq:measurments_local}, $ \epsilon^j_i$ in  \eqref{eq:local_threshold};

centralized observer: $\Sigma_{\s \Oc}^{\s \Mc}$ in \eqref{eq:obs_cent}, $\mb{y}_{}$ in \eqref{eq:measurments}, $ \epsilon^j_0$ in  \eqref{eq:cent_threshold}
\Procedure{Central hypothesis testing \eqref{eq:hypotheses_global}}{} 
\State Run Algorithm \ref{alg:detection_alg_local}
    \If{$\alterH$ in \eqref{eq:hypotheses_local_alternative}}
        \State Switch to a new comm. mode $\sigt \in \Qc \setminus{1}$ 
        \Comment{{\color{gray} Stealthy }}
        \NoNumber{ {\color{gray} attack has been detected locally.}}
    \EndIf
\While {$\nullH$ in \eqref{eq:hypotheses_global_null}} 
     \State Compute central residual ${\rm r}_0$ as in \eqref{eq:obs_cent} 
  	\State Compute corresponding thresholds $\epsilon^j_0$ as in \eqref{eq:cent_threshold}
    \If{$ |{\rm r}^j_0|  > \epsilon^j_0 $}
        \State Reject the null hypothesis $\nullH$ in \eqref{eq:hypotheses_global_alternative}
        \Comment{{\color{gray}Stealthy}}
        \NoNumber{ {\color{gray}attack is detected globally at the control center.}}
    \EndIf
\EndWhile
\EndProcedure
\end{algorithmic}
\normalsize
\end{algorithm}
%
%
%
%
\begin{figure*}[ht]
  \subfloat[UAVs' position in the $x\!-\!y$ plane.\label{fig:ZDA_neutral_sw_position_XY_plane}]{\small
    \includegraphics[width=.49\linewidth]{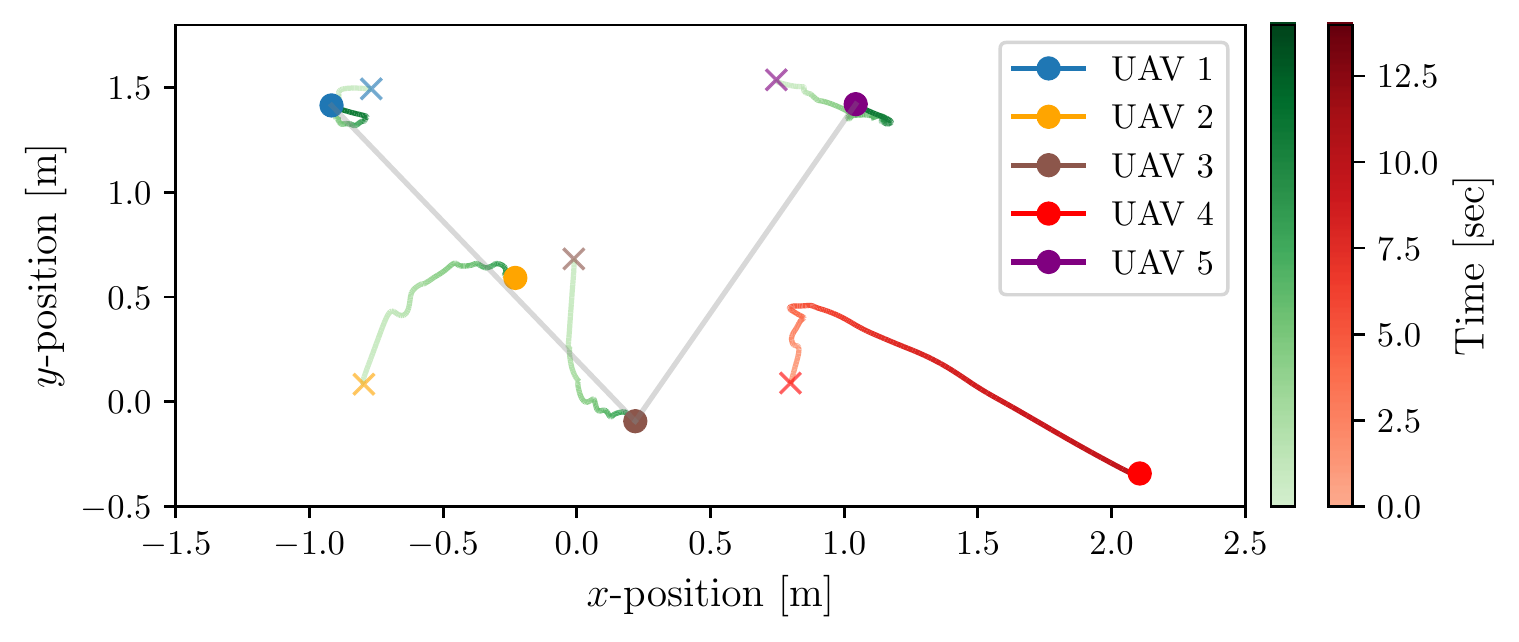}
  }
 \subfloat[Coordination of UAVs in the $y$ direction. \label{fig:ZDA_neutral_sw_relative_disp_y_axis}]{\small
    \includegraphics[width=.49\linewidth]{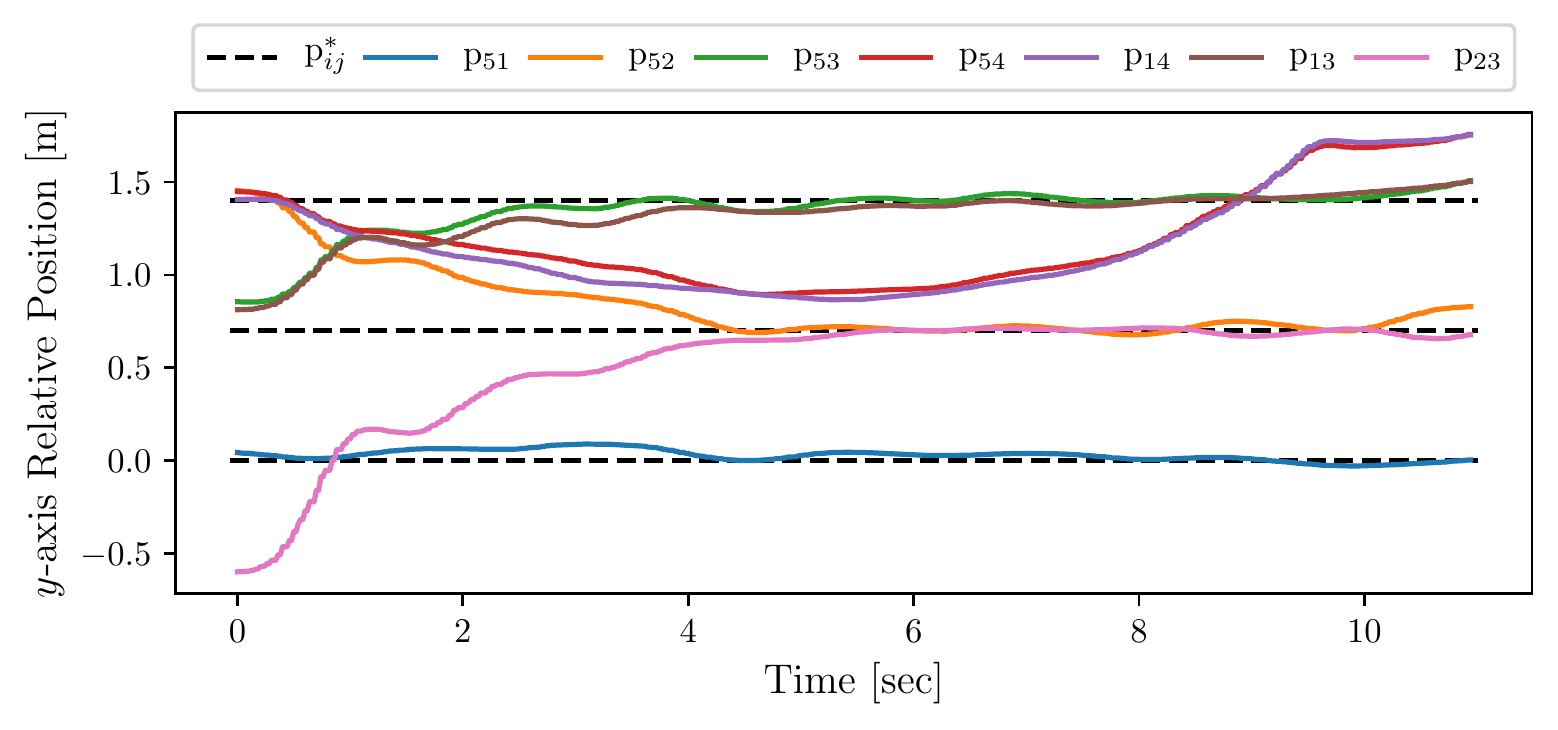}
  }
  \caption{Experiment 1: ZDA on UAVs $1,4,5$ and topology switching from mode $1$ to $4$. (a) UAVs' position in the $x\!-\!y$ plane with the colorbars quantifying the time span. The $\times$ markers and the colored circles show, respectively, the UAVs' initial position and final position during the experiment. Finally, the gray lines visualize the V-shape formation achieved by the final position of the UAVs. (b) The relative positions of UAVs in the $y$ direction corresponding to the inter-UAV communication links in mode $\sigt = 1$, shown in Figure \ref{fig:graph1}. Also, the dashed lines, labeled by ${\rm p}^{*}_{ij},\ i,j \in \Vc$, denote the desired relative positions based on the formation references in Figure \ref{fig:formation_graph}.}
  \label{fig:exp_1_zda}
\vspace{-1ex}
\end{figure*}

It is worth mentioning that a trivial solution for \eqref{eq:local_obs_set} is $\Dc=\Vc$ that is all of the UAVs are equipped with a local detector, although this set can be optimally selected.
\begin{figure}[h]
  \subfloat[Residuals of local monitor $\Sigma^{\, 1}_{\s \Oc}$ run on UAV $1$. The stealthy ZDA is locally detected at $t = 3.22\ \si{sec} $.\label{fig:ZDA_neutral_sw_local_residual_1_XY_small}]{\small
    \includegraphics[width=.98\linewidth]{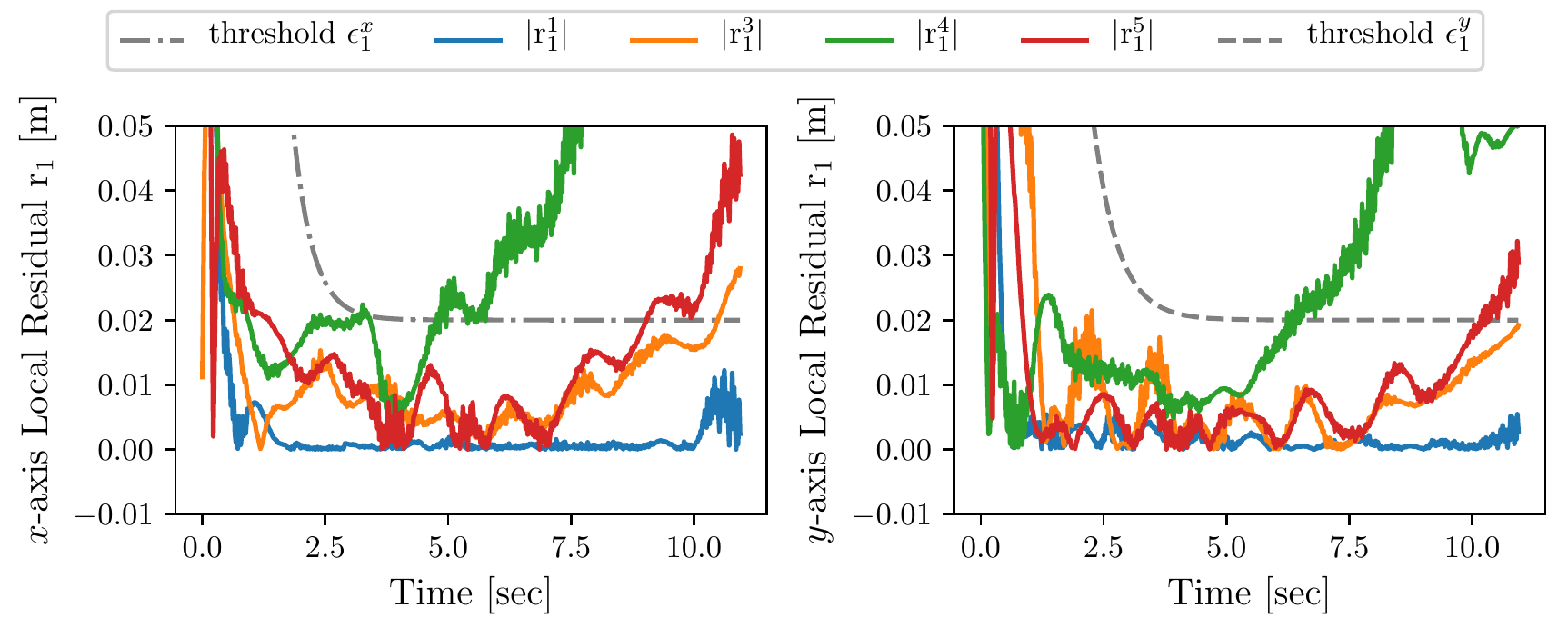}
  }
  \\
 \subfloat[Residuals of local monitor $\Sigma^{\, 3}_{\s \Oc}$ run on UAV $3$. \label{fig:ZDA_neutral_sw_local_residual_3_XY_small}]{\small
    \includegraphics[width=.98\linewidth]{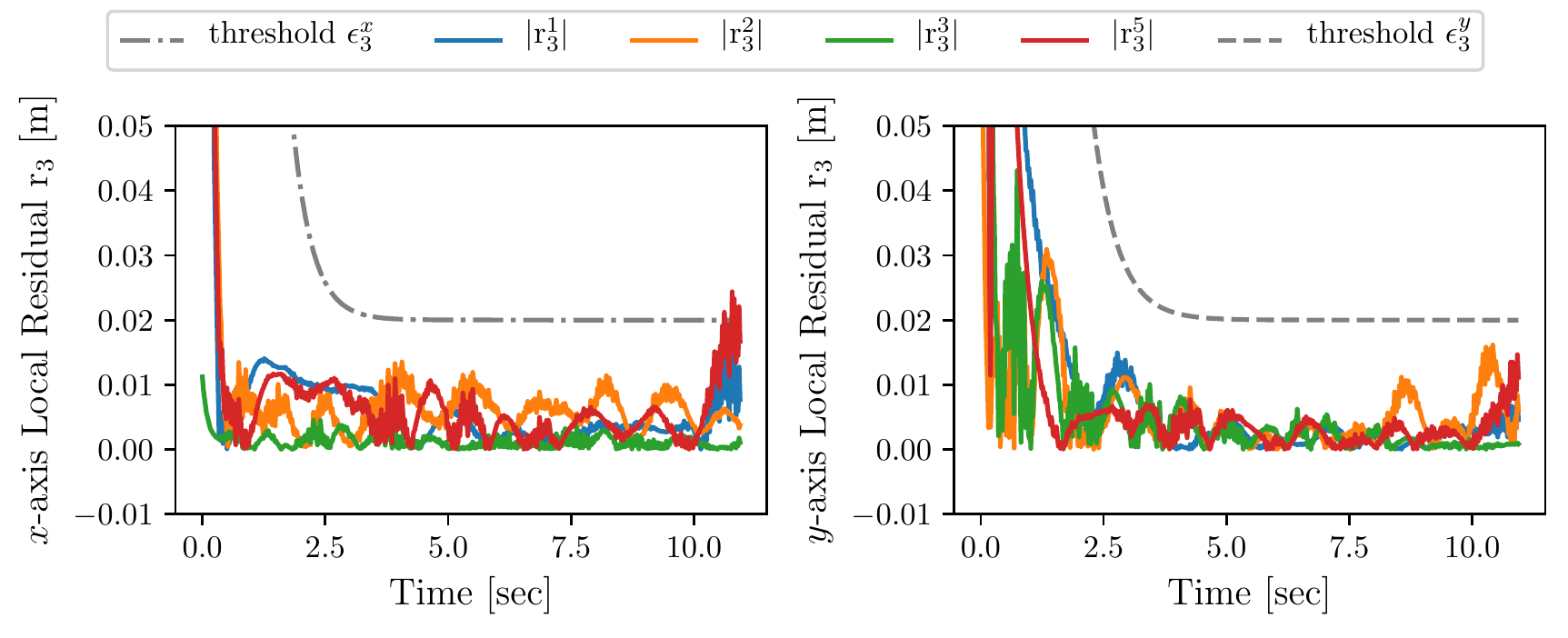}
  }
  \\ 
  \subfloat[Residuals of central monitor $\Sigma^{\s \Mc}_{\s \Oc}$ run on the control center. \label{fig:ZDA_neutral_sw_central_residual_XY_small}]{\small
    \includegraphics[width=.98\linewidth]{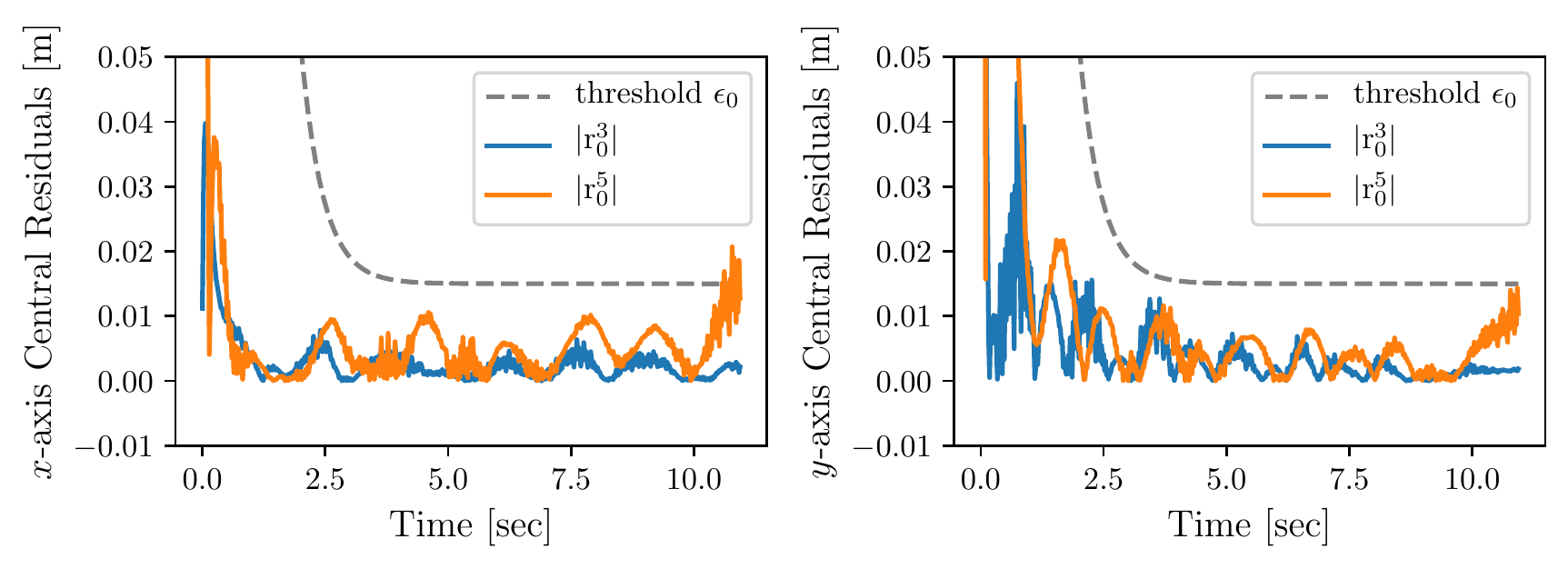}
  } 
  \caption{Experiment 1: ZDA on UAVs $1,4,5$ and topology switching from mode $1$ to $4$, which is triggered by local monitor $\Sigma^{\, 1}_{\s \Oc}$ at $t = 3.22\ \si{sec} $. (a)-(b) The notation $ {\rm r}^{i}_{1} ,\ i \in \{1,3,4,5\}$ ($ {\rm r}^{i}_{3} ,\ i \in \{1,2,3,5\}$), denotes the residual of position estimation for the UAV $1$'s ($3$'s) neighbors obtained by its local monitor $\Sigma^{\, 1}_{\s \Oc}$ ($\Sigma^{\, 3}_{\s \Oc}$) in the $x$ and $y$ directions with the respective thresholds $\epsilon^{x}_1$ ($\epsilon^{x}_3$) and $\epsilon^{y}_1$ ($\epsilon^{y}_3$) as given in \eqref{eq:local_threshold}. (c) The notation $ {\rm r}^{i}_{0},\ i \in \{3,5\}$, denotes  the residual of position estimation for UAVs $3$ and $5$ by the central monitor $\Sigma^{\s \Mc}_{\s \Oc}$ in the $x$ and $y$ directions with the threshold $\epsilon_0$ as given in \eqref{eq:cent_threshold}.}
  \label{fig:exp_1_zda_residuals}
\vspace{-1ex}
\end{figure}

Given Propositions \ref{prop:detectability_local} and \ref{prop:detectability_local_total}, one can verify that the networked UAVs can be secured against stealthy attacks using a set of local monitors, given by \eqref{eq:obs_decent}, that locally detect stealthy attacks, addressing problem \eqref{eq:hypotheses_local}. A procedure for this local hypothesis testing will be presented in Algorithm \ref{alg:detection_alg_local} in Section \ref{Sec:Results}. 

\section{Experimental Results}\label{Sec:Results}
We conducted a set of experiments that serve two purposes. First, the evaluation of the stealthiness of intrusions/deception attacks, described in Section \ref{Sec:stealthy_attacks_realization}, on the wireless communication network of a team of quadrotor UAVs in real-time practical settings. Second, the performance evaluation of the detection schemes presented in Section \ref{Sec:detection_framework}. 
\subsection{Experimental Setup}
Our experimental setup consists of a team of five homogeneous quadrotors (Tello Drones\footnote{https://www.ryzerobotics.com/tello.}), shown in Figure \ref{fig:tello}, flying in a $ 6\ \si{m} \times 4\ \si{m} \times 3\ \si{m} $ flight area that is equipped with  a VICON\footnote{https://www.vicon.com.} motion capture system with 10 cameras. The VICON system provides the ground truth position and orientation of each UAV at $50\ \si{Hz}$ for a central PC running Ubuntu 20.04 with ROS Noetic.  


In our experiments, we use the VICON system's ground truth data available in the central PC to compute the high-level formation control commands that are sent to each UAV at $50\ \si{Hz}$ and to run the central and local monitors, presented in Section \ref{Sec:detection_framework}. The central PC transmits the high-level formation control commands to the UAVs through different Wi-Fi channels and the UAVs' on-board attitude controllers use the received control commands to stabilize and steer the UAVs to their desired pose (see Figure \ref{fig:quad_coord}b). 
This connection setup allows us to replicate the peer-to-peer communication of the UAVs and also to implement stealthily intrusions on the Wi-Fi channels in a controlled setting.
\subsection{Results}
We performed several experiments that serve as a proof of concept of the real-world applicability of the proposed attack detection methods in multi-UAV cooperation settings. In these tests, the UAVs are tasked with achieving a V-shape formation. The spacial configuration of the V-shape formation and a picture of its real-world implementation are shown in Figures \ref{fig:formation_graph} and \ref{fig:V_shapeformation_graph}, respectively. 

In our experiments, the UAVs, indexed by $\Vc = \{1,2,3,4,5\}$, coordinate using the control protocol \eqref{eq:ctrl_proto_normal}, initially in communication mode $\sigt=1$, shown in Figure \ref{fig:graph1}, to achieve the V-shape formation. The UAVs also have consensus on their yaw angle $\psi_i = \psi^* = 0,\ \forall \, i \in \Vc $ as well as their hovering altitude.
We select the position of UAVs $3$ and $5$ as the network-level monitored states at the ground control center that is $\Mc_{\rm p} = \{3,5\}$ and $\Mc_{\rm v} = \emptyset $ in \eqref{eq:measurments}. These measurements are used in the realization of the central monitor (attack detector $\Sigma^{\s \Mc}_{\s \Oc} $) in \eqref{eq:obs_cent} and its residuals in \eqref{eq:cent_threshold}. 
We also let $\Dc = \{ 1, 3 \}$ in \eqref{eq:local_obs_set}, that is UAVs $1$ and $3$, which have the respective set of neighbors $\Nc^{1}_{\sigt =1} = \{3,4,5\}$ and $\Nc^{3}_{\sigt =1}  = \{1,2,5\}$, and the local measurements $\mb{y}_{i} (\text{or}\, {\rm y}_{i}),\ i \in \Dc$ in \eqref{eq:measurments_local}, are selected as the host UAVs for local monitors (attack detectors $\Sigma^{\, i}_{\s \Oc} $'s) in \eqref{eq:obs_decent}.
Accordingly, the condition \eqref{eq:local_obs_set} holds which in turn guarantees the local monitors of UAVs $1$ and $3$ are sufficient to locally detect the stealthy attacks on the entire network of UAVs in a decentralized manner. 
Also, as described earlier in Sections \ref{Sec:formation_control}, the UAVs follow a decoupled dynamics in the $x$ and $y$ directions, and therefore we implement central and local monitors independently for the $x$- and $y$-direction dynamics based on the discretized models of \eqref{eq:cl_sys_scalar}, \eqref{eq:obs_cent}, and \eqref{eq:obs_decent} with the sampling time $T_{\rm s} = 0.02\ \si{sec}$. 

In the following, we present the results of attack detection through the centralized detection scheme with central (global) hypothesis testing \eqref{eq:hypotheses_global} and the central monitor \eqref{eq:obs_cent} as well as through the decentralized detection scheme with local hypothesis testing \eqref{eq:hypotheses_local} and the local monitors of UAVs $1$ and $3$. 
The procedure of the local hypothesis testing is presented in Algorithm \ref{alg:detection_alg_local} and that of the central hypothesis testing is presented in Algorithm \ref{alg:detection_alg_global}. 
\begin{figure}[h]
  \subfloat[UAVs' position in the $x\!-\!y$ plane. The central monitor $\Sigma^{\s \Mc}_{\s \Oc}$ detects the ZDA at $t = 5.6\ \si{sec}$ and ends the experiment.\label{fig:ZDA_sw_XYplane_new}]{\small
    \includegraphics[width=.98\linewidth]{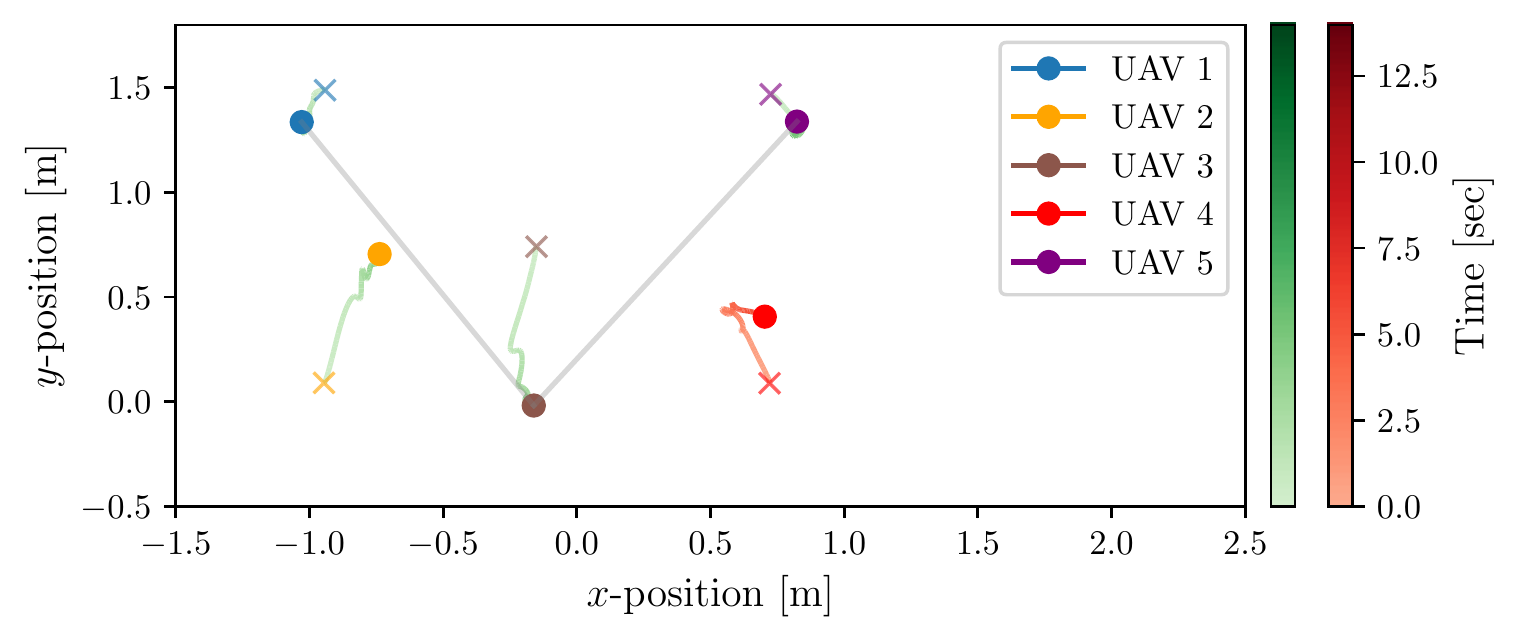}
  }
  \\
 \subfloat[Residuals of local monitor $\Sigma^{\, 1}_{\s \Oc}$ run on UAV $1$. The stealthy ZDA is locally detected at $t = 5.08 \ \si{sec} $. \label{fig:ZDA_sw_local_residual_1_XY_small}]{\small
    \includegraphics[width=.98\linewidth]{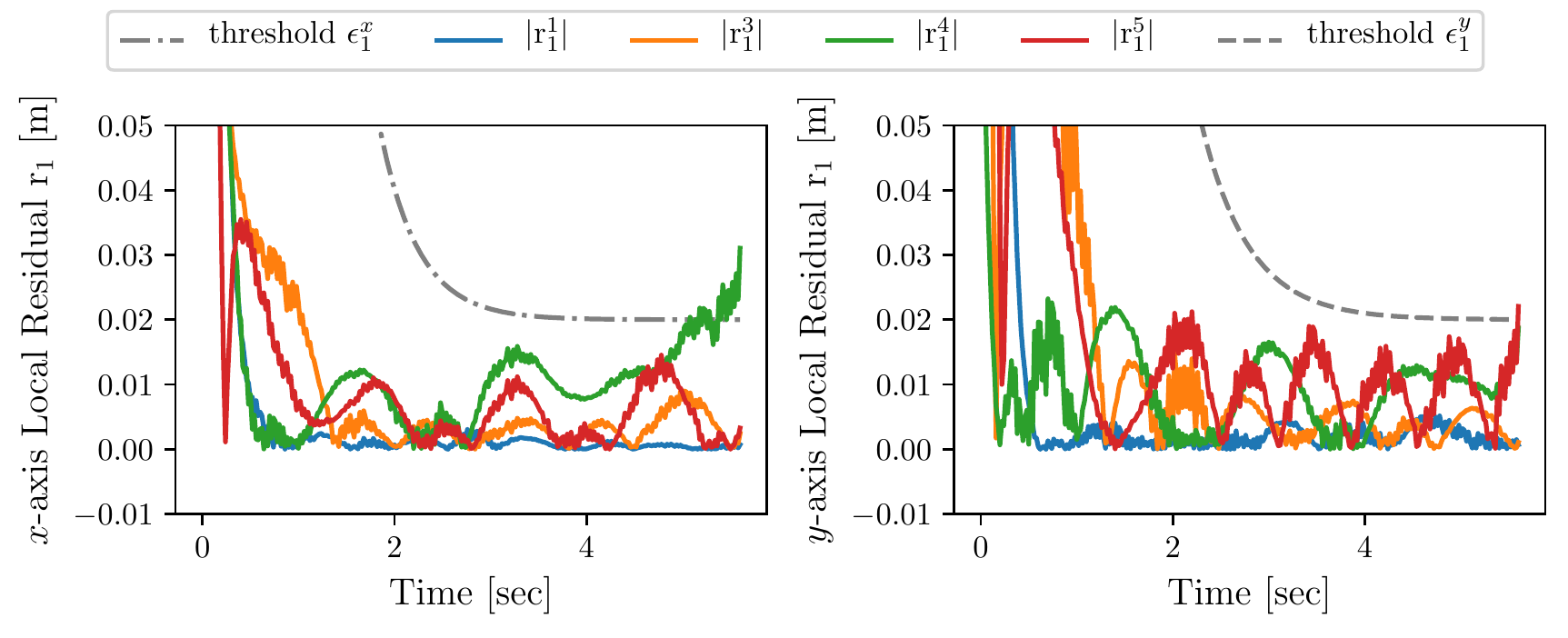}
  }
  \\
  \subfloat[Residuals of central monitor $\Sigma^{\s \Mc}_{\s \Oc}$ run on the control center. The stealthy ZDA is detected at $t = 5.6\ \si{sec} $ using Algorithm \ref{alg:detection_alg_global}. \label{fig:ZDA_sw_central_residual_XY_small}]{\small
    \includegraphics[width=.98\linewidth]{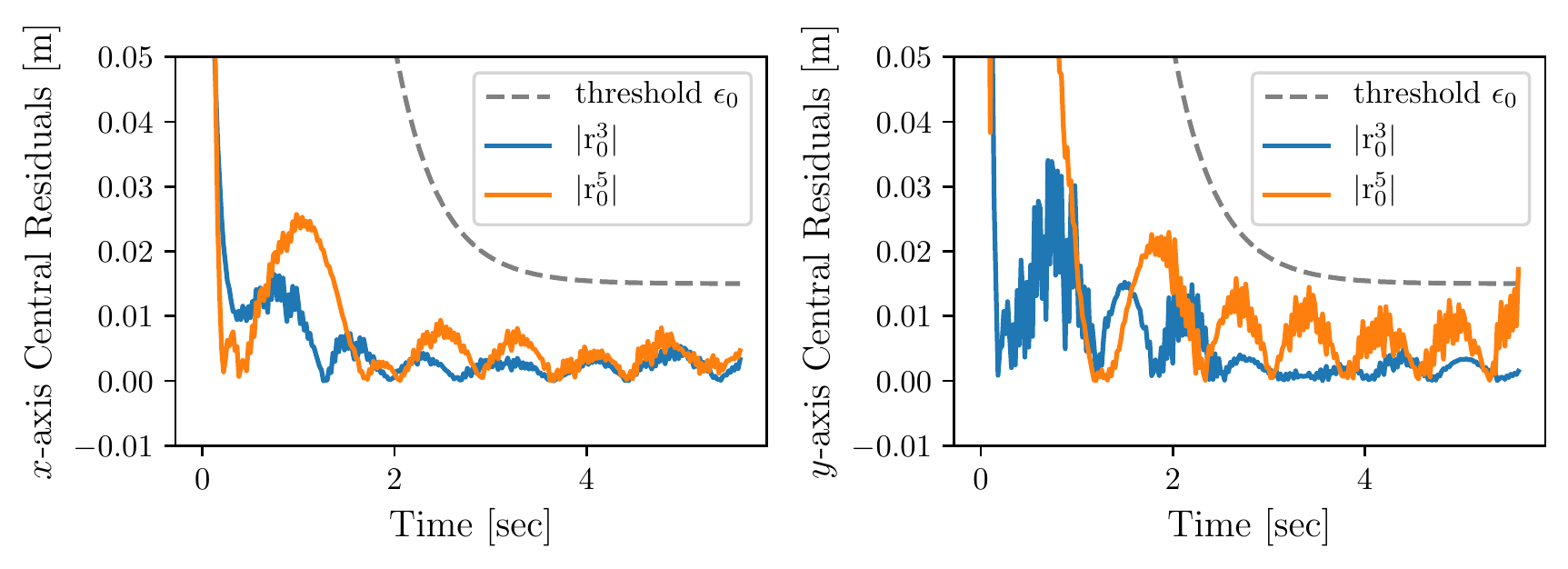}
  }
  \caption{Experiment 2: ZDA on UAVs $1,4,5$ and topology switching from mode $1$ to $3$, which is triggered by local monitor $\Sigma^{\, 1}_{\s \Oc}$ at $t = 5.08\ \si{sec} $. (a) UAVs' position in the $x\!-\!y$ plane with the same annotations as in Figure \ref{fig:ZDA_neutral_sw_position_XY_plane}. (b)-(c) The residuals of local monitor $\Sigma^{\, 1}_{\s \Oc}$ and central monitor $\Sigma^{\s \Mc}_{\s \Oc}$ with the same annotations as in Figures \ref{fig:ZDA_neutral_sw_local_residual_1_XY_small} and \ref{fig:ZDA_neutral_sw_central_residual_XY_small}, respectively.}
  \label{fig:ZDA_sw}
\vspace{-1ex}
\end{figure}
\begin{figure*}[t]
  \subfloat[UAVs' position in the $x\!-\!y$ plane.\label{fig:covert_ramp_XYplane_new}]{\small
    \includegraphics[width=.49\linewidth]{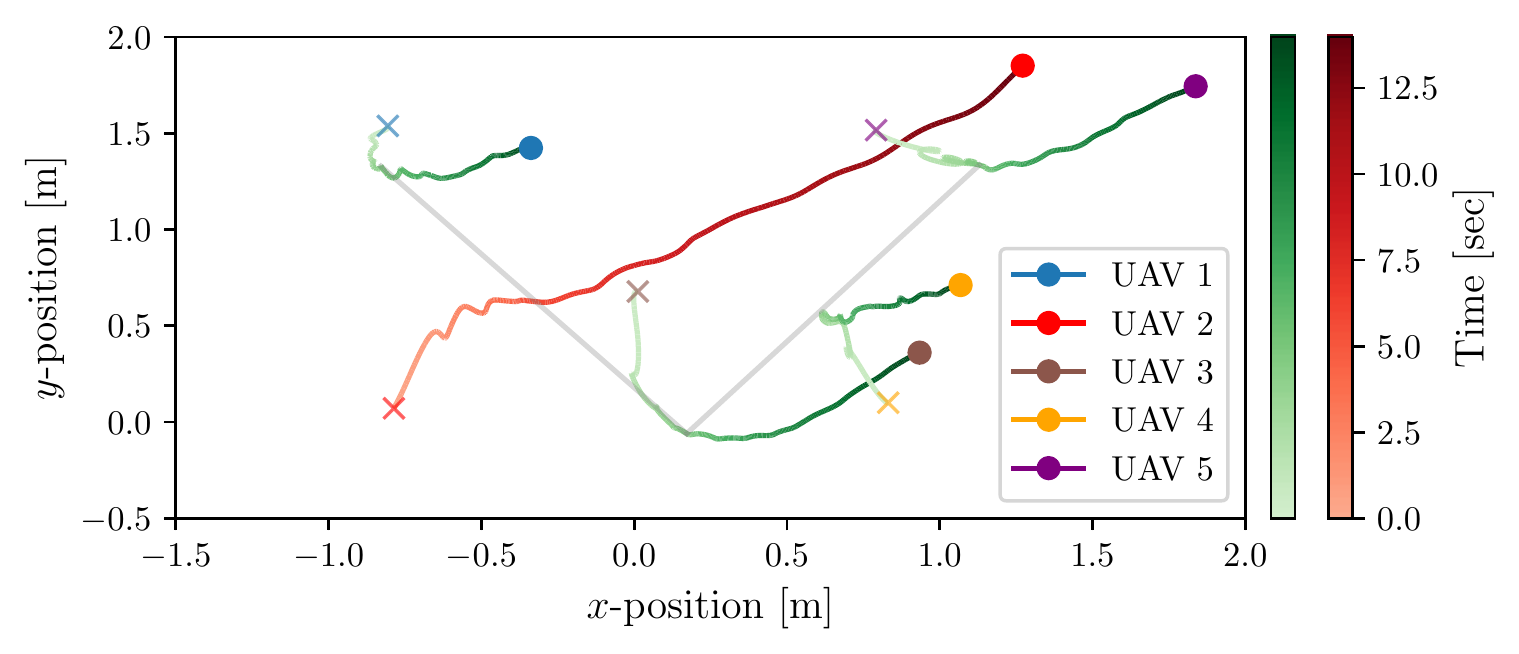}
  }
  \ \
 \subfloat[Realization of stealthiness condition \eqref{eq:stealthy} using $\us$ in \eqref{eq:covert_us} with the starting time $t_{\rm a}=5\ \si{\sec}$.
\label{fig:covert_ramp_measurment_comparison_X}]{\small
    \includegraphics[width=.49\linewidth]{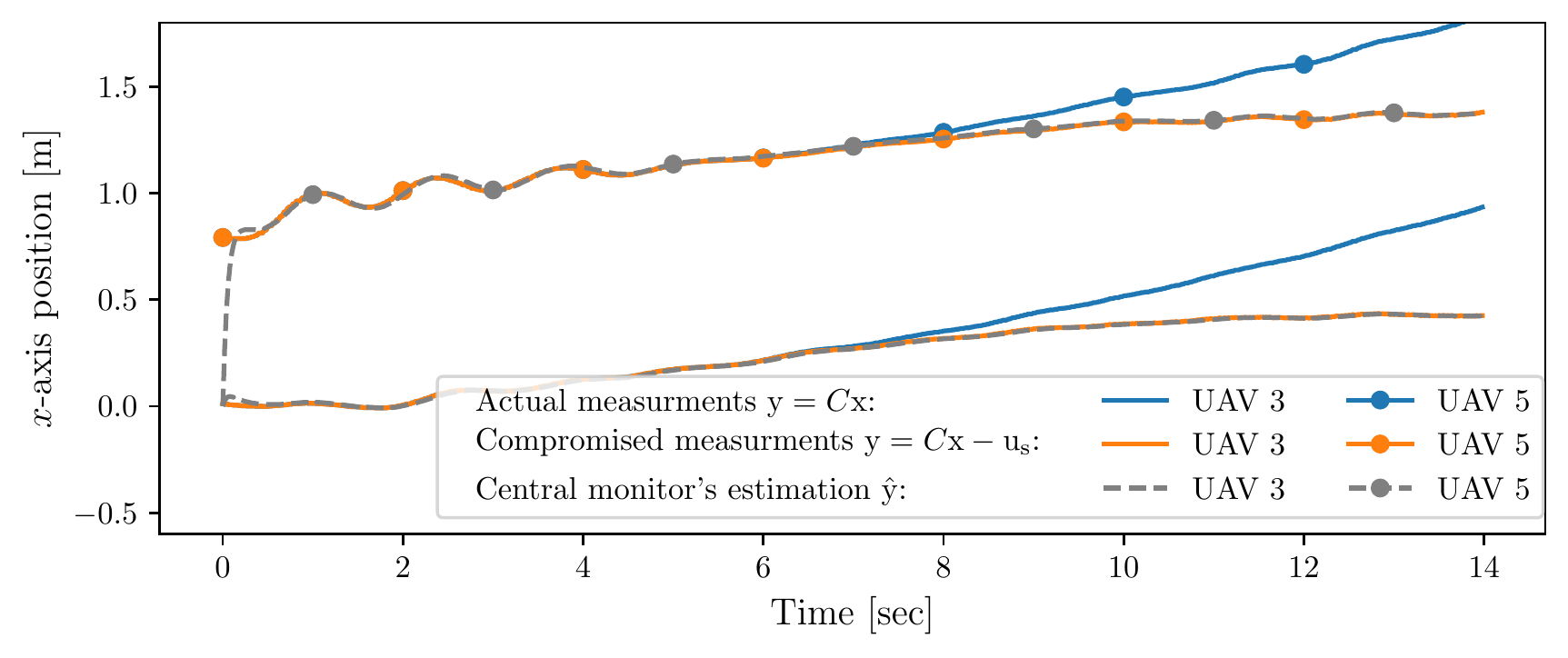}
  }
  \\ \vspace{1ex}
  \subfloat[Residuals of local monitor $\Sigma^{\, 3}_{\s \Oc}$ run on UAV $1$. The stealthy ZDA is detected at $t = 6.4\ \si{sec} $ using Algorithm \ref{alg:detection_alg_local}. \label{fig:covert_ramp_local_residual_3_XY_small}]{\small
    \includegraphics[width=.49\linewidth]{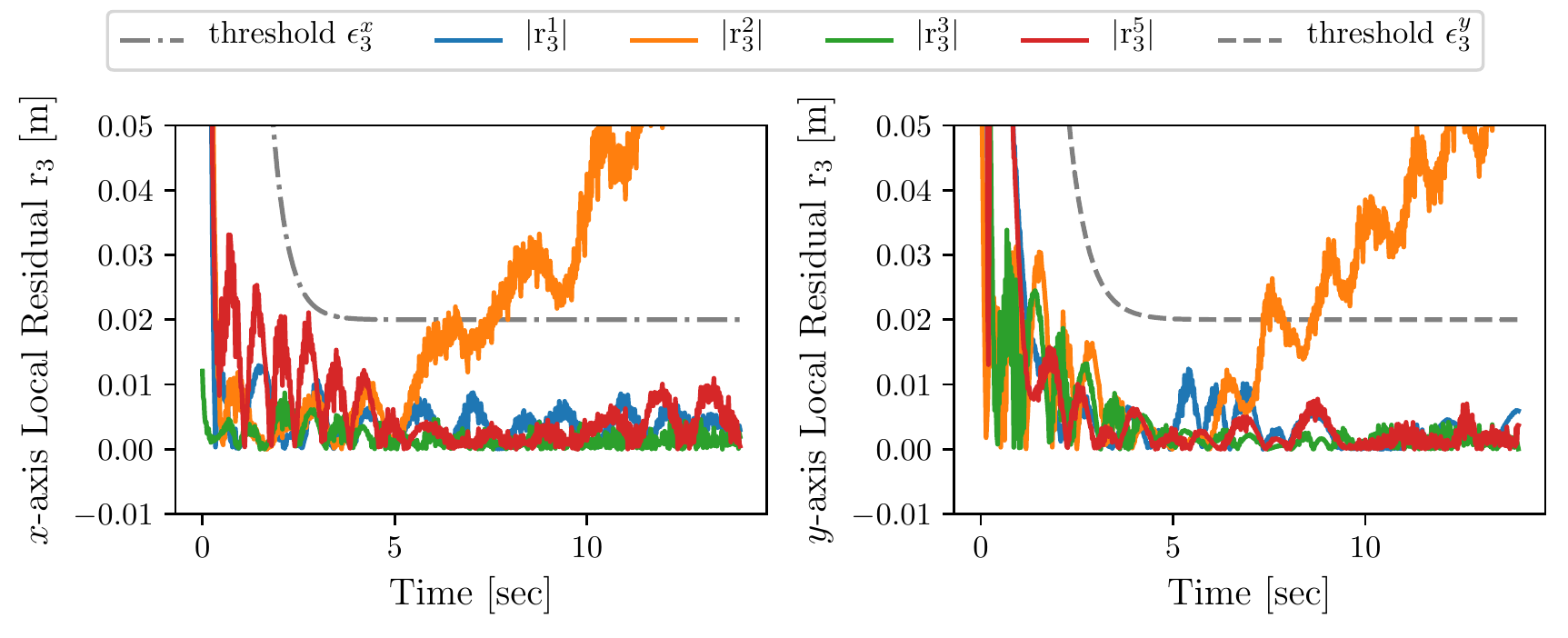}
  }
  \ \
 \subfloat[Residuals of central monitor $\Sigma^{\s \Mc}_{\s \Oc}$ run on the control center. \label{fig:covert_ramp_central_residual_XY_small}]{\small
    \includegraphics[width=.49\linewidth]{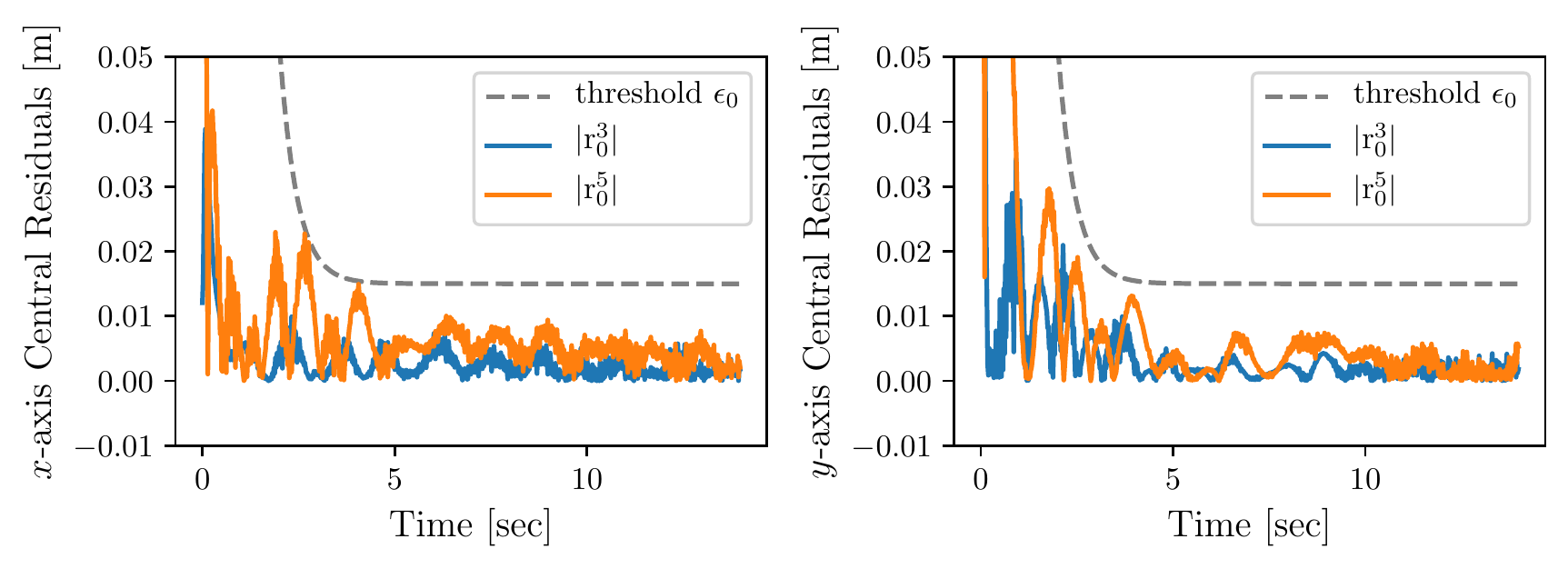}
  }
  \caption{Experiment 3: covert attack on UAV $2$ and topology switching from mode $1$ to $2$, which is triggered by local monitor $\Sigma^{\, 1}_{\s \Oc}$ at $t = 6.4 \ \si{sec} $. (a) UAVs' position in the $x\!-\!y$ plane with the same annotations as in Figure \ref{fig:ZDA_neutral_sw_position_XY_plane}, except the gray lines that visualize the V-shape formation achieved by the UAVs at $t_{\rm a} = 5\ \si{sec}$, the starting time of the covert attack. (b) The effect of measurement alteration using sensory attack $\us$ starting at $t_{\rm a} = 5\ \si{sec}$. (c)-(d) The residuals of local monitor $\Sigma^{\, 3}_{\s \Oc}$ and central monitor $\Sigma^{\s \Mc}_{\s \Oc}$ with the same annotations as in Figures \ref{fig:ZDA_neutral_sw_local_residual_1_XY_small} and \ref{fig:ZDA_neutral_sw_central_residual_XY_small}, respectively.}
  \label{fig:exp_3_covert_ramp}
\vspace{-1ex}
\end{figure*}
\subsubsection{Stealthy zero-dynamics attack} We conducted two experiments evaluating the effectiveness of the central and local monitors in detection of stealthy zero-dynamics attack (ZDA). 
In the first experiment, UAVs $1,\ 4,$ and $5$ are compromised such that their control channels are subject to the discretized version of ZDA signals in \eqref{eq:ZDA_signal} as $\ua = \col \paren{ {\ub}_{{\rm a}_{i}} }_{i \in \Ac} $, $\Ac =\{1,4,5\} $ with ${\ub}_{{\rm a}_{i}} = [{\rm u}^x_{{\rm a}_i}(0)e^{\lambda^x_{o}(k T_{\rm s})}\ \ {\rm u}^y_{{\rm a}_i}(0)e^{\lambda^y_{o}(k T_{\rm s})}\ ]^{\top}$, $\lambda^x_{o} = \lambda^y_{o}  = 0.5,\ k \in \intgnonneg,\ {\rm u}^x_{{\rm a}}(0) =[{\rm u}^x_{{\rm a}_1}(0) \ {\rm u}^x_{{\rm a}_4}(0) \ {\rm u}^x_{{\rm a}_5}(0) ]^{\s \top}  = [-2.34{\rm x}^{\rm a}_{4}(0)\  10.24{\rm x}^{\rm a}_{4}(0)\  {-2.34}{\rm x}^{\rm a}_{4}(0)]^{\s \top},$ $\! {\rm u}^y_{{\rm a}}(0)=-0.7{\rm u}^x_{{\rm a}}(0),\ {\rm x}^{\rm a}_{4}(0)=0.0086 $, and the starting time $t=(k T_{\rm s})=0,\ k = 0$. The network-level measurements \eqref{eq:measurments} with $\Mc_{\rm p} = \{3,5\}$ and $\Mc_{\rm v} = \emptyset $, on the other hand, are not subject to sensory attacks that is $\us = \boldsymbol{0}$. 

In the first experiment, as shown in Figure \ref{fig:ZDA_neutral_sw_position_XY_plane}, all the UAVs start from some initial positions and coordinate to achieve the desired formation while the stealthy ZDA steers UAV $4$ away from its desired configuration that meets \eqref{eq:formation_consensus}. This effect has been illustrated in Figure \ref{fig:ZDA_neutral_sw_relative_disp_y_axis} showing the relative positions of the UAVs as well as their desired values in the $y$ direction over time. It is necessary to note that UAV $4$ hits the safety net enclosing the indoor flight area at $t \approx 9.8\ \si{sec}$. 

In terms of attack detection, Figures \ref{fig:ZDA_neutral_sw_local_residual_1_XY_small} and \ref{fig:ZDA_neutral_sw_local_residual_3_XY_small} show the residuals of local monitors $\Sigma^{\, i}_{\s \Oc}\text{'s},\ i \in \{1,3\} $ in \eqref{eq:obs_decent} for UAVs $1$ and $3$, respectively. Also, Figure \ref{fig:ZDA_neutral_sw_central_residual_XY_small} shows the residuals of the central monitor $\Sigma_{\s \Oc}^{\s \Mc}$ in \eqref{eq:obs_cent} available in the control center. One can verify that the local monitor of UAV $1$, $\Sigma^{\, 1}_{\s \Oc} $, running Algorithm \ref{alg:detection_alg_local}, has detected the stealthy ZDA in a timely manner ($t = 3.22\ \si{sec} $) that is before UAV $4$ collides with the safety net of the flight area at $t \approx 9.8 \ \si{sec} $. However, the ZDA remains stealthy in the residuals of the UAV $3$'s local monitor, $\Sigma^{\, 3}_{\s \Oc} $, and those of the central monitor running Algorithm \ref{alg:detection_alg_global}, regardless of the switch in the inter-UAV's communication topology from mode 1 to mode 4 (see Figure \ref{fig:graphs}) that is triggered by the local monitor $\Sigma^{\, 1}_{\s \Oc} $ at $t = 3.22\ \si{sec} $. This is due to the fact that switching from mode $1$ to mode $4$ does not meet the necessary conditions required for a topology switching to render stealhy attacks detectable for the central monitor in \eqref{eq:obs_cent}. The details of such conditions have been studied in \cite[Th. III.3]{9683427}.

In the second experiment, with the results shown in Figure \ref{fig:ZDA_sw}, the UAVs are under the same ZDA as in the first experiment expect ${\rm x}^{\rm a}_{4}(0)=0.012$. The local monitor $\Sigma^{\, 1}_{\s \Oc}$ successfully detects the ZDA at $t = 5.08\ \si{sec}$ (see Figure \ref{fig:ZDA_sw_local_residual_1_XY_small}) and then triggers a switch in the UAVs' communication from mode $1$ to mode $3$ (cf. Figure \ref{fig:graphs}) that as opposed to Experiment 1, this topology switching results in detection of stealthy ZDA by the central monitor $\Sigma^{\s \Mc}_{\s \Oc}$ at $t = 5.6\ \si{sec}$ (see Figure \ref{fig:ZDA_sw_central_residual_XY_small}). 

It is necessary to note that any topology switching in the inter-UAVs communications results in a discrepancy between the actual dynamics of networked UAVs and its nominal counterpart that is used by the attacker to design stealthy attacks. Yet, the model discrepancy in Experiment 1 did not interfere with the stealthiness of ZDA in the central monitor's residuals while it renders ZDA detectable in the central monitor's residuals in Experiment 2.
These results indicate that not only zero-dynamics  attacks (ZDA) can be implemented in real-time on networked UAVs with partial measurements, they also can remain stealthy regardless of switches in the inter-UAVs' communication topology. Theoretical results to detect stealthy ZDA through topology switching in networked systems with full-state measurements and with partial measurements can be found, respectively, in  \cite{mao2020novel} and \cite{9683427}.
\subsubsection{Covert attack} 
Similar to the ZDA case, we evaluated the detection of covert attack on networked UAVs subject to topology switching by using the local monitors $\Sigma^{\, 1}_{\s \Oc}$ and $\Sigma^{\, 3}_{\s \Oc}$, and the central monitor $\Sigma^{\s \Mc}_{\s \Oc}$. In this experiment, a covert attack, $\mb{u}_{a_ i},\ i \in \Ac = \{2\} $, in the form of a ramp signal with a slope of $3\ \si{Deg.}$, as the roll and pitch angles' perturbation, and the starting time of $t_{\rm a} = 5\ \si{sec}$ is injected through the control channel of UAV $2$. 
The covert attack's effect on the UAVs' formation is shown in Figure \ref{fig:covert_ramp_XYplane_new}. As illustrated, all of the UAVs have deviated from their desired formation configuration that meets \eqref{eq:formation_consensus}. The effect of this deviation/perturbation on the measurements $\mb{y}$ in \eqref{eq:measurments} is simultaneously canceled out by implementing the discretized version of the sensory attack $\us$ given in \eqref{eq:covert_us}. Figure \ref{fig:covert_ramp_measurment_comparison_X} illustrates how the alteration of actual  measurement $\mb{y}$ of the monitored UAV $3$ using the sensory attack $\us$ gives rise to a false state estimation by the central monitor $\Sigma^{\s \Mc}_{\s \Oc}$, rendering the injected attack covert in the central residuals. 
The local monitors, however, are not subject to such alterations and thus are capable of detecting the covert attack in a timely manner as shown in Figure \ref{fig:covert_ramp_local_residual_3_XY_small} for the local monitor $\Sigma^{\, 3}_{\s \Oc}$ of UAV $3$. We note that the local monitor $\Sigma^{\, 3}_{\s \Oc}$ triggers a topology switch from mode $1$ to mode $2$ (cf. Figure \ref{fig:graphs}) at $t = 6.4\ \si{sec} $ to make the covert attack detectable in the residuals of the central monitor $\Sigma^{\s \Mc}_{\s \Oc}$. However, the attack remains stealthy in the central residuals, shown in Figure \ref{fig:covert_ramp_central_residual_XY_small}, regardless of topology switching. The results, consistent with those in the ZDA case, shows the outperformance of the decentralized detection scheme (Algorithm \ref{alg:detection_alg_local}) over the centralized detection scheme (Algorithm \ref{alg:detection_alg_global}). 
It is worth mentioning that one can leverage a larger number of switching communication links on which the centralized monitor relies to improve the performance of the centralized detection scheme. However, this solution raises other challenges such as switching-induced unobservability as well as communication overhead. In the case of the decentralized detection scheme that relies on the network model of UAVs, scalability is a concern for larger teams of UAVs, for which clustering-based solutions such as the one in \cite{9683427} can be applied.

\section{Conclusions}\label{Sec:conclusion}
In this paper, we studied the detection of sophisticated stealthy attacks including zero-dynamics and covert attacks on networked UAVs with switching communication topology in formation control settings. Centralized and decentralized attack detection schemes were developed to detect the stealthy attacks. 
We performed indoor flight tests using quadrotor UAVs to evaluate the effectiveness of our proposed methods. The experiments showed feasibility of stealthy attacks on UAV networks which was previously investigated only in theory or simulations. The experiments also validated our theoretical developments and the effectiveness of the attack detection schemes proposed in this paper. The experimental results demonstrated the importance of timely detection of stealthy attacks and trade-offs existing in centralized vs. decentralized attack detection. Future work will be devoted to extending the results to more complex cooperative settings where we will incorporate cooperative attack mitigation and mission adaptation.
%
\section*{Acknowledgment}
This research is partially supported by the National Science Foundation (award no. 2137753).

\bibliographystyle{IEEEtran}
\bibliography{IEEEabrv,references}

\end{document}